\documentclass[journal,10pt]{IEEEtran}
\makeatletter
\def\endthebibliography{% 
    \def\@noitemerr{\@latex@warning{Empty `thebibliography' environment}}%
    \endlist
}
\makeatother

\usepackage{cite}

\usepackage[pdftex]{graphicx}
\usepackage[caption=false,font=footnotesize]{subfig}
\usepackage{tikz}
\usetikzlibrary{arrows.meta, positioning, shapes.geometric}
\usepackage{amsmath}
\usepackage{mathtools, cuted}
\usepackage{amssymb}
\usepackage{bm}
\usepackage{mathrsfs}
\usepackage{breqn}

\usepackage{pifont}
\usepackage{xcolor}
\usepackage{url}
\usepackage{lettrine}
\usepackage{lipsum}
\usepackage{siunitx}
\usepackage{soul}
\usepackage{array}
\usepackage[inline]{enumitem}
\usepackage{epsfig}
\usepackage{filecontents}

\usepackage{algpseudocode}
\usepackage{algorithm, tabularx}
\usepackage{multirow}
\newcolumntype{L}[1]{>{\raggedright\let\newline\\\arraybackslash\hspace{0pt}}m{#1}}
\newcolumntype{C}[1]{>{\centering\let\newline\\\arraybackslash\hspace{0pt}}m{#1}}
\newcolumntype{R}[1]{>{\raggedleft\let\newline\\\arraybackslash\hspace{0pt}}m{#1}}
\newlength{\maxwidth}

\makeatletter
\newcommand{\multiline}[1]{%
	\begin{tabularx}{\dimexpr\linewidth-\ALG@thistlm}[t]{@{}X@{}}
		#1
	\end{tabularx}
}
\makeatother
\algdef{SE}[SUBALG]{Indent}{EndIndent}{}{{\algorithmicend\ }}
\algtext*{Indent}
\algtext*{EndIndent}

\usepackage{amsthm}
\newtheorem{theorem}{Theorem}

\newtheorem{lemma}[theorem]{Lemma}
\newtheorem{corollary}{Corollary}[theorem]
\theoremstyle{remark}

\newtheorem{remark}{Remark}
\DeclareMathOperator*{\argmin}{\arg\min}
\DeclareMathOperator*{\argmax}{\arg\max}
\newcommand{\cmark}{\ding{51}}%
\newcommand{\xmark}{\ding{55}}%

\newcommand{\Conv}{\mathrm{Conv}}

\begin{document}

\title{Beamforming Gain Maximization for\\Fluid Reconfigurable Intelligent Surface:\\A Minkowski Geometry Approach}

\author{Hong-Bae Jeon,~\IEEEmembership{Member,~IEEE}%
\thanks{This work was supported by Hankuk University of Foreign Studies Research Fund of 2026. \textit{(Corresponding Author: Hong-Bae Jeon)}}%
\thanks{H.-B. Jeon is with the Department of Information Communications Engineering, Hankuk University of Foreign Studies, Yong-in, 17035, Korea (e-mail: hongbae08@hufs.ac.kr).}%
}

\maketitle
\begin{abstract}
This paper investigates beamforming-gain maximization for a fluid reconfigurable intelligent surface (FRIS)-assisted downlink system, where each active port applies a finite-resolution unit-modulus phase selected from a discrete codebook. The resulting design couples the multi-antenna base-station (BS) beamformer with combinatorial FRIS port selection and discrete phase assignment, leading to a highly nonconvex mixed discrete optimization. To address this challenge, we develop an alternating-optimization (AO) framework that alternates between a closed-form maximum-ratio-transmission (MRT) update at the BS and an {optimal} FRIS-configuration update. The key step of the proposed FRIS configuration is a Minkowski-geometry reformulation of the FRIS codebook superposition: by convexifying the feasible reflected-sum set and exploiting support-function identities, we convert the FRIS subproblem into a one-dimensional maximization over a directional parameter. For each direction, the optimal configuration is obtained constructively via per-port directional scoring, Top-$M_o$ port selection, and optimal codeword assignment. For the practically important regular $M_p$-gon phase-shifter codebook, we further derive closed-form score expressions and establish a piecewise-smooth structure of the resulting support function, which leads to a finite critical-angle search that provably identifies the global optimum without exhaustive angular sweeping. Simulation results demonstrate that the proposed framework consistently outperforms benchmarks, achieves near-optimal beamforming gains in exhaustive-search validations, accurately identifies the optimal direction via support-function maximization, and converges rapidly within a few AO iterations.
\end{abstract}

\begin{IEEEkeywords}
Fluid reconfigurable intelligent surface (FRIS), Minkowski sum, convex geometry, alternating optimization.
\end{IEEEkeywords}

\IEEEpeerreviewmaketitle
\section{Introduction}
\label{sec:intro}

\lettrine{T}{he} continuing evolution of wireless communications is being accelerated by an unprecedented demand for extremely high data rates, ultra-reliable transmission, and low-latency connectivity, driven by the proliferation of data-intensive and latency-critical services~\cite{T6G, NewHor, smidaFD, seman}. To accommodate these stringent requirements, sixth-generation (6G) wireless systems are envisioned to transcend traditional link-level design and adopt new paradigms that simultaneously address spectral-efficiency~\cite{yhFD}, interference mitigation in ultra-dense deployments~\cite{MA}, and communication reliability under high mobility and adverse propagation environments~\cite{hjcoop, hjaib}. Meanwhile, these ambitious performance targets must be achieved under tight constraints on energy consumption and hardware complexity, necessitating fundamentally new approaches to wireless system architecture and signal processing.

In this context, the ability to actively reconfigure the wireless propagation environment has emerged as a key enabler for 6G networks~\cite{renzosmart}. Rather than treating the radio environment as an uncontrollable and often hostile medium, future wireless systems are envisioned to incorporate intelligent, environment-aware components that can adaptively shape signal propagation to improve coverage, reliability, and energy-efficiency~\cite{IRSmeetsUAV, alexsmartris}. This paradigm shift has naturally led to the development of reconfigurable intelligent surfaces (RIS), which provide a scalable and cost-effective means of manipulating electromagnetic waves through the electronic control of a large number of low-cost reflecting elements~\cite{RIST,risspm}.

By appropriately adjusting the phase responses, RIS-assisted systems can coherently combine the reflected signals with the direct path, thereby enhancing the received signal strength and coverage without requiring additional radio-frequency chains at the surface~\cite{TMH,nfris}. Owing to this appealing property, RIS have attracted considerable attention as an energy- and cost-efficient alternative to conventional relaying architectures~\cite{vsrelay,HBRIS22}. Despite these advantages, conventional RIS architectures are inherently passive and rely on reflecting elements with fixed physical locations. As a result, their achievable performance gains are fundamentally constrained by the severe double-fading attenuation of cascaded channels~\cite{DF}, particularly under unfavorable propagation conditions such as blockage or large path loss~\cite{HBRIS,HBRIS22}. Moreover, the lack of position reconfigurability limits the ability of RIS to fully adapt to channel variations, even when sophisticated phase optimization techniques are employed.

To overcome the rigidity of fixed-position RIS, fluid reconfigurable intelligent surfaces (FRIS) have been introduced~\cite{FRISlook,FRISmag}, inspired by fluid~\cite{FAS, FAMA, fluidjsac} and reconfigurable~\cite{trimimo, trimimo22} antenna systems. In FRIS, the surface is equipped with a dense grid of candidate ports, among which only a subset is dynamically selected. This port-selection capability provides an additional spatial degree-of-freedom (DoF) beyond phase control, enabling the system to exploit location diversity and to form more favorable cascaded channels by selecting effective ports from a large candidate set~\cite{FRISonoff, FRISsec}. Therefore, FRIS has attracted growing research interest, and a variety of studies have been conducted to investigate its performance limits, practical architectures, and optimization strategies under different system settings.

In early studies on FRIS, the authors of~\cite{FRISlook} and~\cite{FRISpa} provided initial insights into the performance analysis and enhancement potential of FRIS by introducing position reconfigurability. Specifically,~\cite{FRISlook} demonstrated notable performance gains over conventional RIS in both single and multi-user scenario through joint optimization framework, while~\cite{FRISpa} presented analytical characterizations of outage probability and capacity with its tight upper-bound under statistical channel models. Building on this concept,~\cite{FRISonoff} investigated practical FRIS architectures with on/off port selection and discrete phase shifts, showing that selecting a subset of ports can significantly improve link performance while reducing hardware overhead. Recently,~\cite{FRISambc} explored FRIS in ambient backscatter communication systems, where optimizing the positions of fluid elements was shown to effectively mitigate the double-fading effect of cascaded links and significantly improve the achievable backscatter rate compared to conventional RIS-based designs. The potential of FRIS in secure communications was further explored in~\cite{FRISsec} and~\cite{FRISsec22}, where FRIS-assisted systems were shown to substantially enhance secrecy performance under spatial correlation and partial selection of ports. More recently, extensions such as element-level pattern reconfigurability~\cite{FRISbp} have been proposed to further enlarge the design space by jointly optimizing radiation patterns and beamforming. Together, these works clearly demonstrate the performance gains achievable by FRIS.

However, practical FRIS implementations are inevitably subject to stringent hardware constraints, under which each activated port can typically apply only a finite-resolution, unit-modulus phase coefficient~\cite{FRISonoff, FRISbp}. Under these constraints, FRIS configuration gives rise to a mixed discrete optimization over the selected-port index set and the discrete phase codewords, which is further coupled with the multi-antenna base-station (BS) beamformer through the resulting effective channel gain~\cite{RISlinear, JAP}. Owing to this strong coupling, direct joint optimization is generally intractable, and a variety of algorithmic approaches have been proposed~\cite{FRISonoff, FRISsec22, FRISsec, FRISambc, FRISlook}. Nevertheless, many existing methods rely on restrictive assumptions or structural simplifications to render the problem tractable. In particular, some approaches are limited to uniform-linear-array (ULA) FRIS configurations~\cite{FRISsec}, while others simplify the problem structure by assuming a single-antenna BS~\cite{FRISonoff, FRISambc, FRISsec22} or by adopting idealized continuous phase-shift models that are not practically realizable under finite-resolution hardware constraints~\cite{FRISlook, FRISambc, FRISsec22}. Moreover, the direct BS-user link is often ignored~\cite{FRISonoff, FRISsec22, FRISambc, FRISlook}, or the resulting designs provide no guarantees on near-optimality~\cite{FRISlook, FRISsec22, FRISambc, FRISonoff, FRISsec}. Although these methods can achieve reasonable performance through iterative procedures, they suffer from several limitations by relying on simplifying assumptions that assume single-antenna BS or decouple FRIS port selection from discrete phase quantization and thus fail to reflect their inherent realistic joint structure. A summary of the key assumptions and limitations of existing FRIS-related works is provided in Table~\ref{tab:comparison}. Consequently, the fundamental geometry governing FRIS signal combining under multi-antenna systems and practical hardware constraints remains insufficiently characterized, leaving the underlying unified structure of the FRIS configuration problem largely unexplored.

\begin{table}[t]
\centering
\caption{Comparison of FRIS-Related Works}
\label{tab:comparison}
\begin{tabular}{C{2.05cm} C{0.65cm} C{0.65cm} C{0.65cm} C{0.65cm}C{0.65cm} C{0.5cm}}
\hline
\textbf{Factor} 
& \textbf{\cite{FRISlook}} 
& \textbf{\cite{FRISonoff}} 
& \textbf{\cite{FRISsec}}
& \textbf{\cite{FRISambc}}
& \textbf{\cite{FRISsec22}}
& \textbf{Our work} \\
\hline
Multi-antenna BS
& \cmark 
& \xmark 
& \cmark 
& \xmark 
& \xmark 
& \cmark \\
\hline
Non-ULA\\FRIS structure
& \cmark 
& \cmark 
& \xmark 
& \cmark 
& \cmark
& \cmark \\
\hline
Finite-resolution\\phase codebooks 
& \xmark 
& \cmark 
& \cmark 
& \xmark 
& \xmark 
& \cmark \\
\hline
Analysis with\\direct link
& \xmark 
& \xmark 
& \cmark 
& \xmark 
& \xmark 
& \cmark \\
\hline
Near-optimality\\verification
& \xmark 
& \xmark 
& \xmark 
& \xmark 
& \xmark 
& \cmark \\
\hline
\end{tabular}
\end{table}

Motivated by these challenges, this paper investigates the beamforming-gain maximization problem in an FRIS-assisted downlink system under multi-antenna BS and practical finite-resolution phase constraints, from a geometric perspective grounded in Minkowski-sum representations of FRIS codebook spaces superpositioned by channel coefficients~\cite{dohminkow, dohnps}. By explicitly accounting the MRT architecture at BS and for the fluidic and discrete-phase hardware constraint, the resulting design problem becomes a mixed discrete optimization over port indices, phase codewords, and the transmit beamformer. As a result, the proposed framework reveals a fundamental geometric structure underlying FRIS signal combining and enables an optimal, fully constructive configuration via support-function maximization. In particular, for the practically important regular $M_p$-gon phase-shifter codebook, this structure further leads to closed-form per-port scoring and a finite candidate-angle search that provably identifies the global optimum without exhaustive angular sweeping. The main contributions of this work are summarized as follows:
\begin{itemize}
\item \textbf{Problem formulation with practical FRIS hardware constraints:}
We formulate a beamforming-gain maximization problem for an FRIS-assisted downlink system, where the BS beamformer, the selected-port set, and the finite-resolution unit-modulus phase coefficients are jointly optimized. The resulting objective tightly couples continuous beamforming with combinatorial port selection and discrete codeword assignment, yielding a highly nonconvex mixed discrete optimization.
\item \textbf{Minkowski-geometry reformulation and an optimal FRIS-update subroutine:}
To tackle the above coupling, we develop a convex-geometry-based FRIS optimization principle by representing the feasible reflected-sum as a union of Minkowski sums and then convexifying it without loss of optimality. By leveraging support-function identities, we transform the FRIS configuration subproblem into a {one-dimensional} maximization over a directional parameter. Herein for each candidate direction, the optimal FRIS configuration can be obtained in a fully constructive manner. Specifically, we first compute a direction-dependent score for every candidate port that quantifies its best possible contribution along the given direction under the available discrete phase codewords. We then select the ports with the largest scores. For the selected ports, we assign the phase codeword that maximizes the directional contribution. Finally, we repeat this procedure over candidate directions and choose the direction that yields the largest overall support value, which directly determines the resulting selected-port set and the corresponding quantized phase coefficients.
\item \textbf{Specialization to regular $M_p$-gon codebooks:}
For the practically important regular $M_p$-gon phase-shifter codebook, we derive closed-form expressions for the per-port directional scores and establish a nearest-grid codeword selection rule. Exploiting the symmetric structure of the regular$M_p$-gon, the resulting closed-form score expressions provide clear analytical and geometrical insight into the impact of finite phase resolution compared to general finite codebooks. Moreover, this specialization reveals a piecewise-smooth structure of the resulting support function, which enables a finite candidate-angle maximization procedure that avoids exhaustive one-dimensional search while provably identifying the global optimum.
\item \textbf{AO framework with closed-form MRT update and monotonic convergence:}
Building on the above FRIS-update subroutine, we develop an AO framework that alternates between a closed-form MRT beamformer update, and an {optimal} FRIS configuration update that solves the discrete port-selection and phase-quantization step via the proposed support-function maximization. Since each block optimizes the objective with the other block fixed, the objective sequence is guaranteed to be nondecreasing and hence convergent.
\item \textbf{Performance, near-optimality, and practical behavior validation:}
Extensive simulations demonstrate that the proposed framework consistently outperforms conventional baselines across a wide range of system parameters. Furthermore, exhaustive-search validations confirm that the proposed design achieves near-optimal beamforming gains. In addition, the results explicitly illustrate how the optimal FRIS configuration is determined through the proposed support-function-based directional search, validating the effectiveness of the one-dimensional phase optimization procedure. Finally, convergence plots verify that the AO algorithm saturates within only a few iterations, implying that the required iteration count is typically very small, so that the practical computational burden is mainly governed by the structural system parameters rather than by iterative convergence.
\end{itemize}
\textit{To the best of our knowledge, this work is the first to reveal a fundamental geometric interpretation of FRIS signal combining under finite-resolution phase constraints, showing that fluid port selection induces a convex Minkowski structure that can be optimally exploited via support functions. This leads to a principled AO design with fast convergence, low effective complexity, and consistent performance gains over benchmarks.}
\section{System Model}
\label{sec:system_model}
\begin{figure}[t]
  \begin{center}
    \includegraphics[width=0.7\columnwidth,keepaspectratio]{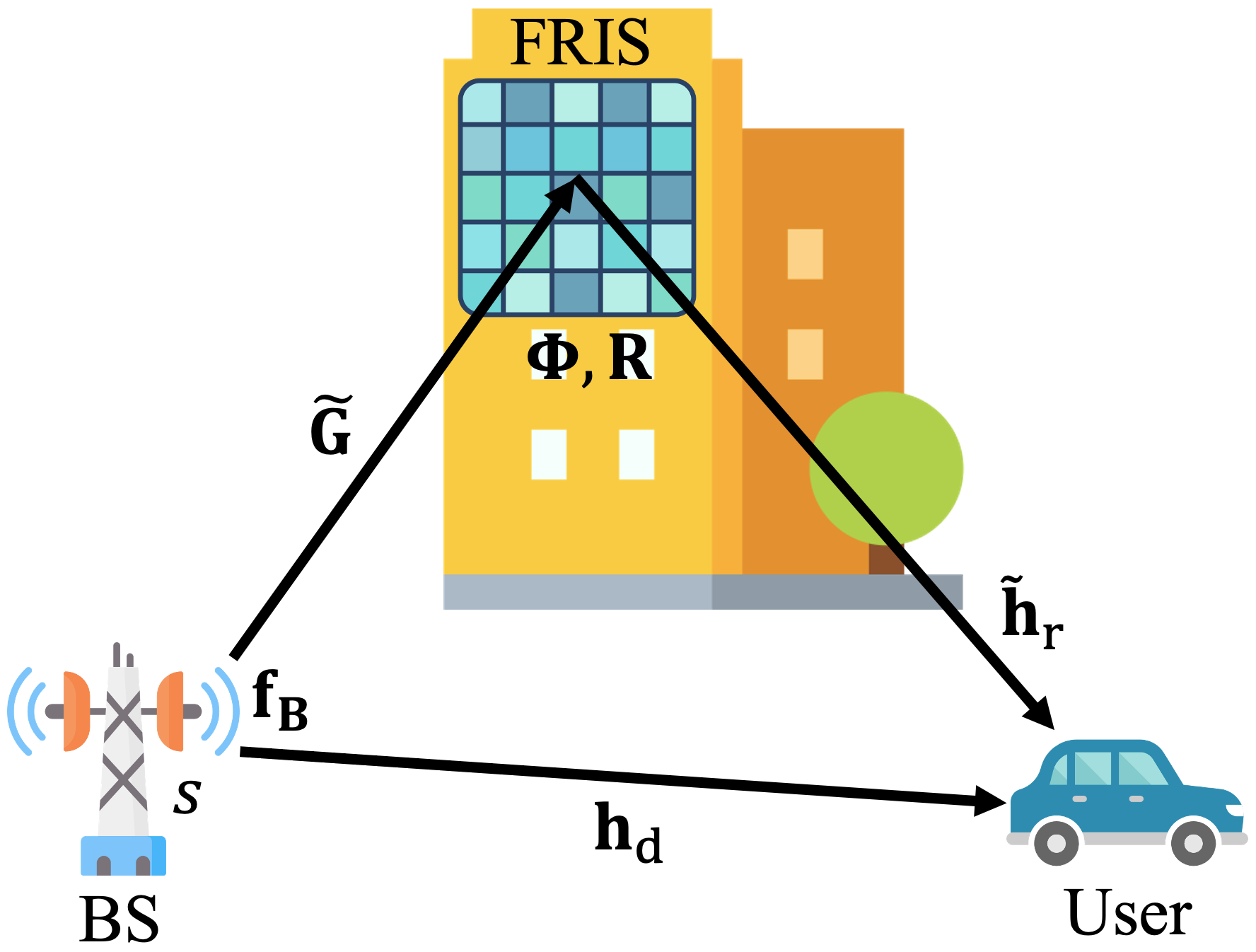}
    \caption{FRIS-assisted downlink system.}
    \label{fig_sys}
  \end{center}
\end{figure}
\subsection{Signal Model}
As depicted in Fig.~\ref{fig_sys}, we consider a downlink system in which an $N$-antenna BS serves a single-antenna user with the aid of an FRIS. The BS transmits a unit-power symbol $s$ using a beamformer $\mathbf f_{\mathrm B}\in\mathbb C^{N}$ under the power constraint
\begin{equation}
\label{fbc}
\|\mathbf f_{\mathrm B}\|_2^2 \le P.
\end{equation}

The FRIS consists of $M=M_x\times M_x$ reflective elements uniformly arranged on a square surface of size $W_x\lambda \times W_x\lambda$, where $\lambda$ is the carrier wavelength and $W_x$ denotes the aperture size normalized by $\lambda$. The corresponding inter-element spacing is $d=\frac{W_x\lambda}{M_x}$. Accordingly, the FRIS configuration is described by $\mathbf w=[w_1 \cdots w_M]^{\mathrm T}$ subject to the $M_o$-sparsity constraint~\cite{FRISonoff,FRISsec}
\begin{equation}
\label{msc}
|\{m\in\{1,\cdots,M\}: w_m\neq 0\}|=M_o,
\end{equation}
where $w_m\neq 0$ indicates that port $m$ is selected. Let $\mathbf p_m$ denote the position of the $m$th FRIS candidate port. Due to the close spacing between the elements, we model spatial correlation under isotropic scattering via the zeroth-order spherical Bessel function $j_0 (x)$~\cite{FRISpa, FRISonoff, bessbook}. Thereby, we can define the FRIS correlation matrix $\mathbf R\succeq 0$ as
\begin{equation}
\label{eq:RB_RF_j0}
[\mathbf R]_{m,i}=j_0\left(\frac{2\pi}{\lambda}\|\mathbf p_m-\mathbf p_i\|_2\right).
\end{equation}

Let $\widetilde{\mathbf G}\in\mathbb C^{M\times N}$ and $\widetilde{\mathbf h}_{\mathrm r}\in\mathbb C^{M}$ have Rician and Rayleigh fading under rich scattering conditions, respectively~\cite{FRISsec}. We then model the BS-FRIS and the FRIS-user channel as
\begin{equation}
\label{eq:G_corr}
\mathbf G=\mathbf R^{1/2}\widetilde{\mathbf G},~\mathbf h_{\mathrm r}=\mathbf R^{1/2}\widetilde{\mathbf h}_{\mathrm r},
\end{equation}
respectively, and the direct BS-user channel is denoted by $\mathbf h_{\mathrm d}\in\mathbb C^{N}$, which also follows Rayleigh fading under rich scattering conditions~\cite{FRISsec}. We assume that the BS has access to perfect channel state information (CSI) for all involved links during the optimization process~\cite{FRISmag}. Thereby, the received signal $y$ is expressed as
\begin{equation}
\label{eq:rx_matrix_dir}
y=\Big(\mathbf h_{\mathrm d}^{*}\mathbf f_{\mathrm B}+\mathbf h_{\mathrm r}^{*}\mathbf \Phi\mathbf G\mathbf f_{\mathrm B}\Big)s+n=\mathbf a^* \mathbf f_{\mathrm{B}} s+n,
\end{equation}
where $\mathbf a^*\triangleq\mathbf h_{\mathrm d}^*+\mathbf h_{\mathrm r}^{*}\mathbf \Phi\mathbf G$, $n\sim\mathcal{CN}(0,\sigma^2)$ is additive noise and $\mathbf \Phi=\mathrm{diag}(\mathbf w)\in\mathbb C^{M\times M}$ is the FRIS reflection-selection matrix.

Herein,~\eqref{eq:rx_matrix_dir} can be equivalently written as
\begin{equation}
\label{eq:rx_scalar_dir}
y=\mathbf h_{\mathrm d}^{*}\mathbf f_{\mathrm B}s+\sum_{m=1}^{M} w_m h_{\mathrm r,m}^{*}\mathbf G_{m,:}\mathbf f_{\mathrm B} s+n.
\end{equation}
For convenience, define the cascaded coefficient
\begin{equation}
\label{eq:hm_def}
h_m \triangleq h_{\mathrm r,m}^{*}\mathbf G_{m,:}\mathbf f_{\mathrm B}~(m=1,\cdots,M),
\end{equation}
and define the direct-path scalar $d\triangleq \mathbf h_{\mathrm d}^{*}\mathbf f_{\mathrm B}$. Then~\eqref{eq:rx_scalar_dir} yields the compact scalar form
\begin{equation}
\label{eq:rx_signal_dir}
y=\Big(d+\sum_{m=1}^{M} w_m h_m\Big)s+n.
\end{equation}
Herein, each selected FRIS port $m$ applies a reflection coefficient $w_m$ chosen from a finite unit-modulus codebook
\begin{equation}
\label{eq:W_def}
\mathcal W_m \subset \{w\in\mathbb C: |w|=1\},
\end{equation}
which models discrete phase resolution. Since the zero-coefficient corresponds to unselected port, we define the augmented feasible set for FRIS port $m$:
\begin{equation}
\label{eq:theta_def}
\Theta_m \triangleq \{0\}\cup\mathcal W_m~(m=1,\cdots,M).
\end{equation}
\subsection{Problem Formulation}
\label{sec:problem_formulation}
From~\eqref{eq:rx_signal_dir}, we define the effective channel coefficient $z$ as
\begin{equation}
\label{eq:z_def_dir}
z \triangleq d+\sum_{m=1}^{M} w_m h_m,
\end{equation}
whose magnitude represents the resulting beamforming gain at the user. Accordingly, the FRIS configuration problem for the maximization of beamforming gain is formulated as
\begin{equation}
\label{prob:fris_exact_dir}
\begin{aligned}
\max_{\{w_m\}, \Gamma, \mathbf f_{\mathrm{B}}}~ & \left| d+\sum_{m=1}^{M} w_m h_m \right| \\
\text{s.t.}~ 
& w_m \in \Theta_m~(m=1,\cdots,M),~\|\mathbf f_{\mathrm B}\|_2^2\le P\\
& \big|\underbrace{\{m\in\{1,\cdots,M\}: w_m\neq 0\}}_{\triangleq\Gamma}\big| = M_o,
\end{aligned}
\end{equation}
where $\Gamma\subseteq\{1,\cdots,M\}$ denote the selected-port index set with $|\Gamma|=M_o$. For a given $\Gamma$, set $w_m=0$ for all $m\notin\Gamma$ and optimize $\{w_m\in\mathcal W_m\}_{m\in\Gamma}$. Then~\eqref{prob:fris_exact_dir} is equivalent to% choosing $\Gamma$ and $\{w_m\}_{m\in\Gamma}$ that maximize
\begin{equation}
\label{eq:z_Gamma_dir}
\begin{aligned}
&\max_{\{w_m\}, \Gamma, \mathbf f_{\mathrm{B}}}~\left|d+\sum_{m\in\Gamma} h_m w_m\right|\\
&\text{s.t.}~w_m\in\mathcal W_m~(m\in\Gamma), \|\mathbf f_{\mathrm B}\|_2\le P.
\end{aligned}
\end{equation}
Due to the strong coupling between $\mathbf f_{\mathrm B}$, $\{w_m\}$, and $\Gamma$ through the effective channel gain, together with the combinatorial nature of choosing $\Gamma$ from $\{1,\cdots,M\}$,~\eqref{eq:z_Gamma_dir} becomes highly nonconvex, making direct joint optimization intractable. To address this difficulty, we employ an AO framework that decomposes the original problem into two subproblems; beamformer design and FRIS configuration, which are solved in an iterative manner.
\section{Proposed Approach: Update of $\mathbf f_{\mathrm{B}}$}
\label{subsec:bf_closed_form}
For fixed $\{w_m\}$ and $\Gamma$,~\eqref{eq:z_Gamma_dir} reduces to
\begin{equation}
\label{prob:bf_sub}
\max_{\mathbf f_{\mathrm B}}~\big|\mathbf a^{*}\mathbf f_{\mathrm B}\big|~\text{s.t.}~\|\mathbf f_{\mathrm B}\|_2^2\le P,
\end{equation}
whose solution $\mathbf f_{\mathrm B}^\star$ is trivially given by MRT beamformer:
\begin{equation}
\label{eq:f_star_MRT}
\mathbf f_{\mathrm B}^\star=\sqrt{P}\frac{\mathbf a}{\|\mathbf a\|_2}.
\end{equation}
\section{Proposed Approach: Update of $\{w_m\}$ and $\Gamma$ (General Case)}
\label{subsec:rigorous_solution_phi}
\subsection{Transformation to FRIS-Only Subproblem}
For fixed $\mathbf f_{\mathrm{B}}$ in~\eqref{eq:f_star_MRT}, plugging it into~\eqref{prob:fris_exact_dir}, the original problem is equivalently reduced to
\begin{equation}
\label{prob:fris_reduced22}
\begin{aligned}
\max_{\{w_m\}, \Gamma}~ & \left| d+\sum_{m=1}^{M} w_m h_m \right| \\
\text{s.t.}~ 
& w_m \in \Theta_m~(m=1,\cdots,M),\\
& \big|\underbrace{\{m\in\{1,\cdots,M\}: w_m\neq 0\}}_{\triangleq\Gamma}\big| = M_o.
\end{aligned}
\end{equation}
Define the feasible reflected-sum set with respect to $\Gamma$ as
\begin{equation}
\label{eq:Z_Gamma}
\mathcal Z(\Gamma)
\triangleq
\sum_{m\in\Gamma} h_m \mathcal W_m,
\end{equation}
which is a Minkowski sum of finite sets. The overall feasible reflected-sum set under $M_o$-port selection is the finite union
\begin{equation}
\label{eq:Z_Mo}
\mathcal Z_{M_o}
\triangleq
\bigcup_{\substack{\Gamma\subseteq\{1,\cdots,M\}\\|\Gamma|=M_o}}
\mathcal Z(\Gamma).
\end{equation}
With $d$, the overall feasible combined-sum set is the translation
\begin{equation}
\label{eq:Z_Mo_dir}
\mathcal Z_{M_o,t} \triangleq d+\mathcal Z_{M_o}=\{d+z:  z\in\mathcal Z_{M_o}\}.
\end{equation}
Therefore,~\eqref{prob:fris_exact_dir} can be equivalently rewritten as
\begin{equation}
\label{prob:minkowski_union_dir}
\max_{z}~ |z|~\text{s.t.}~ z\in Z_{M_o,t}.
\end{equation}
\subsection{Equivalent Transformations}
The proposed method proceeds through the following equivalent transformations:
\begin{equation}
\label{eq:equiv_chain_dir}
\begin{aligned}
\max_{z\in Z_{M_o,t}}|z|
\overset{(a)}{=}
\max_{z\in\mathcal P_{M_o,t}}|z|
&\overset{(b)}{=}
\max_{|u|=1} h_{\mathcal P_{M_o,t}}(u)\\
&\overset{(c)}{=}
\max_{\phi\in[0,2\pi)} h_{\mathcal P_{M_o,t}}(e^{j\phi}),
\end{aligned}
\end{equation}
where $h_{(\cdot)}$ is a support function will be defined later, and
\begin{equation}
\label{eq:P_dir_def}
\mathcal P_{M_o,t}\triangleq\Conv(\mathcal Z_{M_o,t})=\Conv(d+\mathcal Z_{M_o})=d+\mathcal P_{M_o},
\end{equation}
where $\Conv(\cdot)$ is the convex hull of the given set~\cite{boyd}, and $\mathcal P_{M_o}\triangleq\Conv(\mathcal Z_{M_o})$~\cite{cbmt}. This reduces the original mixed discrete optimization into a one-dimensional search over $\phi$.
\subsubsection{Equality~(a): Convexification Without Loss of Optimality}
\begin{lemma}
\label{lem:conv_reduction_dir}
It holds that
\begin{equation}
\label{eq:conv_reduction_geom_only_dir}
\max_{z\in\mathcal Z_{M_o,t}}|z|=\max_{z\in\mathcal P_{M_o,t}}|z|.
\end{equation}
\end{lemma}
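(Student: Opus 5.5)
The argument rests on two facts. The modulus $|\cdot|$ is a convex function on $\mathbb C\cong\mathbb R^2$. A convex function attains its maximum over the convex hull of a finite set at one of the points of that set.

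The first step is to record that $\mathcal Z_{M_o,t}$ is finite. Each $\mathcal W_m$ is a finite codebook, so each Minkowski sum $\mathcal Z(\Gamma)$ in~\eqref{eq:Z_Gamma} is finite. There are only $\binom{M}{M_o}$ index sets $\Gamma$, so the union $\mathcal Z_{M_o}$ in~\eqref{eq:Z_Mo} is finite, and so is its translate $\mathcal Z_{M_o,t}=d+\mathcal Z_{M_o}$. Consequently $\mathcal P_{M_o,t}=\Conv(\mathcal Z_{M_o,t})$ is a compact convex polygon, which is the identity~\eqref{eq:P_dir_def} already stated. Both maxima in~\eqref{eq:conv_reduction_geom_only_dir} therefore exist, because $|\cdot|$ is continuous and both sets are compact.

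The inequality ``$\le$'' is immediate, since $\mathcal Z_{M_o,t}\subseteq\mathcal P_{M_o,t}$.

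For ``$\ge$'', take any $z\in\mathcal P_{M_o,t}$. By the definition of the convex hull of a finite set, $z=\sum_k\lambda_k z_k$ with $z_k\in\mathcal Z_{M_o,t}$, $\lambda_k\ge0$ and $\sum_k\lambda_k=1$. The triangle inequality then gives
\begin{equation*}
|z|\le\sum_k\lambda_k|z_k|\le\max_k|z_k|\le\max_{z'\in\mathcal Z_{M_o,t}}|z'|.
\end{equation*}
Maximizing over $z\in\mathcal P_{M_o,t}$ yields the reverse inequality, and hence equality. The same argument shows that any maximizer over $\mathcal P_{M_o,t}$ can be replaced by a point of $\mathcal Z_{M_o,t}$ with the same modulus. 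This matters later: it means the optimum of the convexified problem is realized by a feasible FRIS configuration, so the convexification loses nothing.

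None of these steps is a real obstacle. The only point needing care is that the translation by $d$ commutes with taking the convex hull. The argument above avoids this issue by working directly with $\mathcal Z_{M_o,t}$ and its hull, so~\eqref{eq:P_dir_def} is not actually needed for the lemma itself.
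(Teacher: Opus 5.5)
Your proof is correct, and it is more self-contained than the paper's.

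The paper does not argue the equality directly. It only notes that $\mathcal Z_{M_o,t}$ is compact and invokes Lemma~1 of \cite{dohminkow} to equate the maximum modulus over a compact set with that over its convex hull. You instead prove this fact from scratch:
\begin{itemize}
\item finiteness of the codebooks makes $\mathcal Z_{M_o,t}$ finite;
\item the inclusion $\mathcal Z_{M_o,t}\subseteq\mathcal P_{M_o,t}$ gives one inequality;
\item writing a hull point as a convex combination and applying the triangle inequality gives the other.
\end{itemize}
The triangle-inequality step does not actually need finiteness, since every point of the convex hull of \emph{any} set is a finite convex combination of its points. Finiteness, or the paper's compactness, is used only to guarantee that both maxima are attained.

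The trade-off between the two routes is as follows. The cited lemma applies to arbitrary compact sets, where compactness of the hull in $\mathbb C$ follows from Carath\'eodory's theorem. Your route needs no external reference, and it makes explicit that the optimum over $\mathcal P_{M_o,t}$ is achieved by a feasible FRIS configuration. Your remark that~\eqref{eq:P_dir_def} is not needed for the lemma itself is also correct.
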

\begin{proof}
Since $\mathcal Z_{M_o,t}$ is compact by definition, we can use~\cite[Lemma~1]{dohminkow}.
\end{proof}
Lemma~\ref{lem:conv_reduction_dir} implies that convexifying the feasible reflected-sum set does not alter the optimal beamforming gain, i.e., the maximum achievable magnitude is always attained on the boundary of the convex hull, allowing the original discrete and nonconvex optimization over $\mathcal Z_{M_o,t}$ to be equivalently solved over its convex hull $\mathcal P_{M_o,t}$ without any loss of optimality.
\subsubsection{Equality~(b)-(c): Norm Maximization via Support Functions}
We first recall the following trivial identity.
\begin{lemma}
\label{lem:polar}
For any $z\in\mathbb C$, it holds that
\begin{equation}
\label{eq:polar_identity}
|z|=\max_{|u|=1}\Re\{u^*z\}.
\end{equation}
Moreover, if $z\neq 0$, the maximizers are $u^\star=e^{j\arg(z)}$.
\end{lemma}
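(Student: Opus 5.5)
The plan is to reduce the statement to the Cauchy--Schwarz inequality in $\mathbb R^2$. The key observation is that for complex numbers $u$ and $z$, the real part $\Re\{u^*z\}$ equals the Euclidean inner product of $u$ and $z$ viewed as vectors in $\mathbb R^2$.

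\textbf{Upper bound.} Fix $z\in\mathbb C$ and take any $u$ with $|u|=1$. Then
\begin{equation*}
\Re\{u^*z\}\le |u^*z| = |u|\,|z| = |z|.
\end{equation*}
Hence $\sup_{|u|=1}\Re\{u^*z\}\le |z|$.

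\textbf{Attainment.} It remains to show the bound is achieved. If $z\neq 0$, choose $u^\star=e^{j\arg(z)}$. Then $u^{\star *}z=e^{-j\arg(z)}|z|e^{j\arg(z)}=|z|$, so the bound is attained. If $z=0$, every unit $u$ gives $\Re\{u^*z\}=0=|z|$. In both cases the supremum is attained and is therefore a maximum, which proves~\eqref{eq:polar_identity}.

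\textbf{Uniqueness of the maximizer.} For the second claim, suppose $z\neq 0$ and write $u=e^{j\psi}$ and $z=|z|e^{j\arg(z)}$. Then
\begin{equation*}
\Re\{u^*z\}=|z|\cos(\arg(z)-\psi).
\end{equation*}
Since $|z|>0$, this equals $|z|$ exactly when $\cos(\arg(z)-\psi)=1$, that is, when $\psi\equiv\arg(z)\pmod{2\pi}$. So the maximizer is unique and equals $u^\star=e^{j\arg(z)}$.

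\textbf{Main obstacle.} There is no substantive obstacle here. The only care needed is to treat the degenerate case $z=0$ separately, since $\arg(z)$ is undefined there and every unit $u$ is a maximizer. This is why the uniqueness statement is restricted to $z\neq 0$.
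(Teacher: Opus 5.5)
Your proof is correct and follows essentially the same route as the paper, which writes $z=|z|e^{j\theta}$, $u=e^{j\phi}$ and reads off both the bound and the equality condition (for $z\neq 0$, unique $u^\star=e^{j\arg(z)}$) from $\Re\{u^*z\}=|z|\cos(\theta-\phi)\le |z|$. Your separate upper bound via $\Re\{u^*z\}\le |u^*z|$ and your explicit handling of $z=0$ are small refinements of that argument, not a different route.
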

\begin{proof}
Write $z=|z|e^{j\theta}$ and set $u=e^{j\phi}$. Then $\Re\{u^*z\}=|z|\cos(\theta-\phi)\le |z|$, with equality iff $\phi=\theta~(\mathrm{mod}~2\pi)$, which leads to $u^\star=e^{j\arg(z)}$.
\end{proof}
For any nonempty compact convex set $\mathcal C\subset\mathbb C$, define its support function as
\begin{equation}
\label{eq:support_def_geom_only}
h_{\mathcal C}(u)\triangleq \max_{z\in\mathcal C}\Re\{u^* z\}~(|u|=1).
\end{equation}
Applying Lemma~\ref{lem:polar} and the definition above to $\mathcal C=\mathcal P_{M_o,t}$:
\begin{equation}
\begin{aligned}
\max_{z\in\mathcal P_{M_o,t}}|z|
&=\max_{z\in\mathcal P_{M_o,t}} \max_{|u|=1}\Re\{u^*z\} \\
&=\max_{|u|=1} \max_{z\in\mathcal P_{M_o,t}}\Re\{u^*z\}\\
&=\max_{|u|=1} h_{\mathcal P_{M_o,t}}(u)\\
&=\max_{\phi\in[0,2\pi)} h_{\mathcal P_{M_o,t}}(e^{j\phi}).
\label{eq:norm_to_support_dir}
\end{aligned}
\end{equation}
Combining Lemma~\ref{lem:conv_reduction_dir} with~\eqref{eq:norm_to_support_dir} establishes~\eqref{eq:equiv_chain_dir}. Hence, it remains to evaluate $h_{\mathcal P_{M_o,t}}(e^{j\phi})$ efficiently for given $\phi$.

The geometry in Fig.~\ref{fig_geo} illustrates the interpretation of $h_{\mathcal C}(u)$. For a general $\mathcal C\subset\mathbb C$ in Fig.~\ref{fig_geo1}, $h_{\mathcal C}(u)$ represents the maximum projection of $\mathcal C$ onto the direction $u=e^{j\phi}$, obtained by translating a line orthogonal to $u$ until it first touches the boundary of $\mathcal C$. As $\phi$ varies, the supporting point moves continuously along the boundary. By contrast, when $\mathcal C$ is a polygon induced by a finite phase codebook in Fig.~\ref{fig_geo2}, the maximizer is always attained at a vertex~\cite{boyd, cbmt}. Consequently, $h_{\mathcal C}(u)$ becomes piecewise linear in $\phi$, with the maximizing vertex remaining unchanged over angular intervals and switching at the bisectors between adjacent vertices, leading to a direction-dependent vertex selection. This observation enables an algorithmic realization for general and regular $M_p$-gon codebooks in Section~\ref{tifp} and~\ref{subsec:regular_Mgon}, respectively, where the optimal vertex, interpretable as the one whose associated Thales circle dominates the projection along a given direction~\cite{dohminkow}, and hence the optimal quantized phase, can be efficiently identified through direction-dependent interval partitioning.
%\begin{figure}[t]
%  \begin{center}
%    \includegraphics[width=0.7\columnwidth,keepaspectratio]{Fig/support}
%    \caption{Geometrical illustration of $h_{\mathcal C}(u)$ with respect to general and polygon type of $\mathcal C$.}
%    \label{fig_geo}
%  \end{center}
%\end{figure}
\begin{figure}[t]
    \centering
    \subfloat[]{%
        \includegraphics[width=0.24\textwidth]{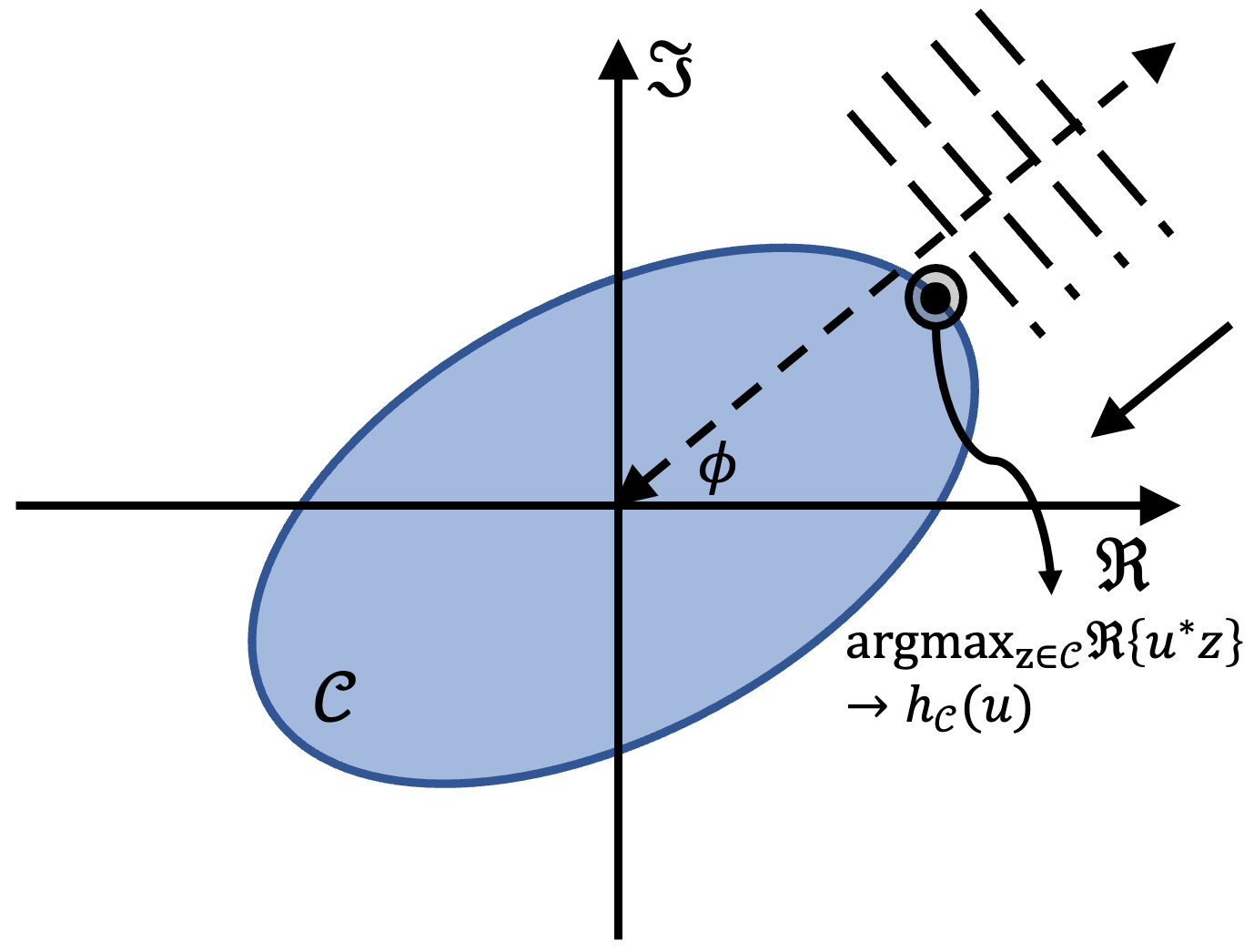}
        \label{fig_geo1}%
    }
    \subfloat[]{%
        \includegraphics[width=0.24\textwidth]{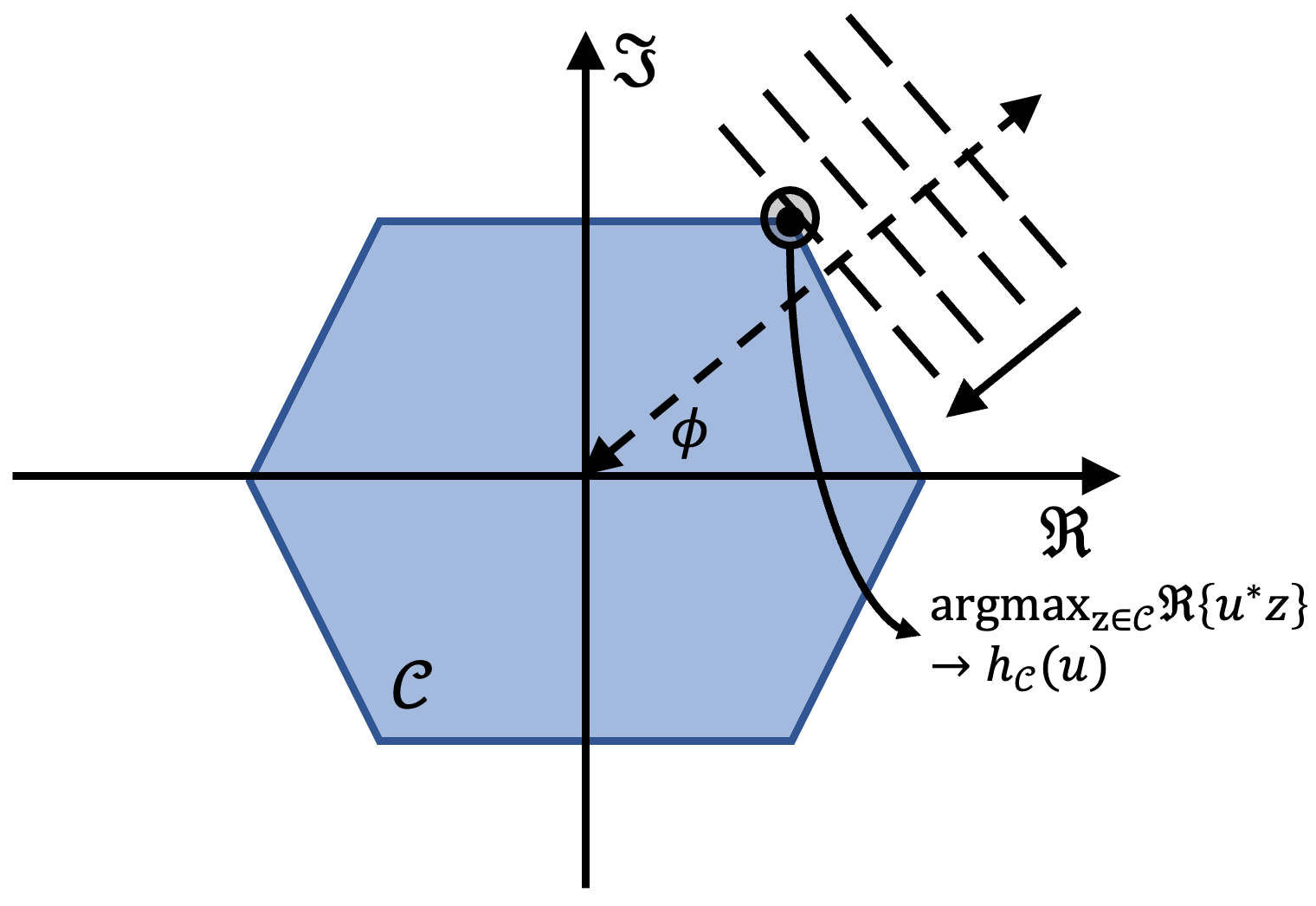}
        \label{fig_geo2}%
    }
    \caption{Geometrical illustration of $h_{\mathcal C}(u)$ with respect to (a) general and (b) polygon type of $\mathcal C$.}
    \label{fig_geo}
\end{figure}

\subsection{Evaluation of $h_{\mathcal P_{M_o,t}}(e^{j\phi})$}
\subsubsection{Support-Function Identities and Translation Decomposition}
We use the following standard identities of support functions. Throughout, all sets are assumed to be nonempty and compact; whenever Minkowski-sum additivity is invoked, the involved sets are convex~\cite{cbmt}:
\begin{equation}
\label{eq:support_identities_dir}
\begin{aligned}
&h_{\Conv(\mathcal S)}(u)=\max_{z\in\mathcal S}\Re\{u^*z\},\\
&h_{\Conv(\cup_i \mathcal C_i)}(u)=\max_i h_{\mathcal C_i}(u),\\
&h_{\mathcal C_1+\mathcal C_2}(u)=h_{\mathcal C_1}(u)+h_{\mathcal C_2}(u).
\end{aligned}
\end{equation}
By~\eqref{eq:P_dir_def}, the support function decomposes as
\begin{equation}
\label{eq:support_translation_dir}
h_{\mathcal P_{M_o,t}}(u)
=
h_{\{d\}}(u)+h_{\mathcal P_{M_o}}(u)
=
\Re\{u^*d\}+h_{\mathcal P_{M_o}}(u),
\end{equation}
and setting $u=e^{j\phi}$ yields
\begin{equation}
\label{eq:support_translation_phi_dir}
h_{\mathcal P_{M_o,t}}(e^{j\phi})
=
\Re\{e^{-j\phi}d\}+h_{\mathcal P_{M_o}}(e^{j\phi}).
\end{equation}
Note that the direct link contributes an additive directional term that depends only on $\phi$ and does not couple across ports.
\subsubsection{Support Function of $\mathcal P_{M_o,t}$ and Portwise Scores}
As in the reflected-only case, applying~\eqref{eq:support_identities_dir} to $\mathcal P_{M_o}=\Conv(\mathcal Z_{M_o})$ yields, for any $|u|=1$,
\begin{equation}
\label{eq:support_PMo_geom_only}
h_{\mathcal P_{M_o}}(u)=\max_{\substack{\Gamma\subseteq\{1,\cdots,M\}\\|\Gamma|=M_o}}\sum_{m\in\Gamma} h_{h_m\Conv(\mathcal W_m)}(u).
\end{equation}
Let $u=e^{j\phi}$ with $\phi\in[0,2\pi)$. Define the port-wise score
\begin{equation}
\label{eq:score_geom_only_repeat}
\begin{aligned}
s_m(\phi)
\triangleq~
&h_{h_m\Conv(\mathcal W_m)}(e^{j\phi})\\
=~
&\max_{w\in\mathcal W_m}\Re\{e^{-j\phi}h_m w\}~ (m=1,\cdots,M),
\end{aligned}
\end{equation}
which leads to
\begin{equation}
\label{eq:support_topMo_geom_only_repeat}
h_{\mathcal P_{M_o}}(e^{j\phi})
=
\max_{\substack{\Gamma\subseteq\{1,\cdots,M\}\\|\Gamma|=M_o}}
\sum_{m\in\Gamma}s_m(\phi).
\end{equation}
Combining~\eqref{eq:support_translation_phi_dir} and~\eqref{eq:support_topMo_geom_only_repeat} yields
\begin{equation}
\label{eq:support_PMo_dir_final}
h_{\mathcal P_{M_o,t}}(e^{j\phi})
=
\Re\{e^{-j\phi}d\}
+
\max_{\substack{\Gamma\subseteq\{1,\cdots,M\}\\|\Gamma|=M_o}}
\sum_{m\in\Gamma}s_m(\phi).
\end{equation}
\subsection{Top-$M_o$ Induced FRIS Port Selection}
\label{tifp}
Since the second term in~\eqref{eq:support_PMo_dir_final} is a linear sum of per-port scores under a cardinality constraint, an optimizer is obtained by selecting the $M_o$ largest scores:
\begin{equation}
\label{eq:Gamma_phi_geom_only}
\Gamma^\star(\phi)
\in
\argmax_{\substack{\Gamma\subseteq\{1,\cdots,M\}\\|\Gamma|=M_o}}
\sum_{m\in\Gamma}s_m(\phi)
=
\mathrm{Top}\text{-}M_o\big(\{s_m(\phi)\}_{m=1}^M\big),
\end{equation}
where ties may be broken arbitrarily. Note that $\Re\{e^{-j\phi}d\}$ does not affect this selection rule because it is independent of $\Gamma$. Given $\Gamma^\star(\phi)$, for each selected port choose
\begin{equation}
\label{eq:w_hat_phi_geom_only}
\hat w_m(\phi)\in\argmax_{w\in\mathcal W_m}\Re\{e^{-j\phi}h_m w\}~(m\in\Gamma^\star(\phi)),
\end{equation}
and set $w_m=0$ for $m\notin\Gamma^\star(\phi)$. This construction attains $h_{\mathcal P_{M_o}}(e^{j\phi})$ in~\eqref{eq:support_topMo_geom_only_repeat}.
\begin{figure*}[t]
  \begin{center}
    \includegraphics[width=1.8\columnwidth,keepaspectratio]{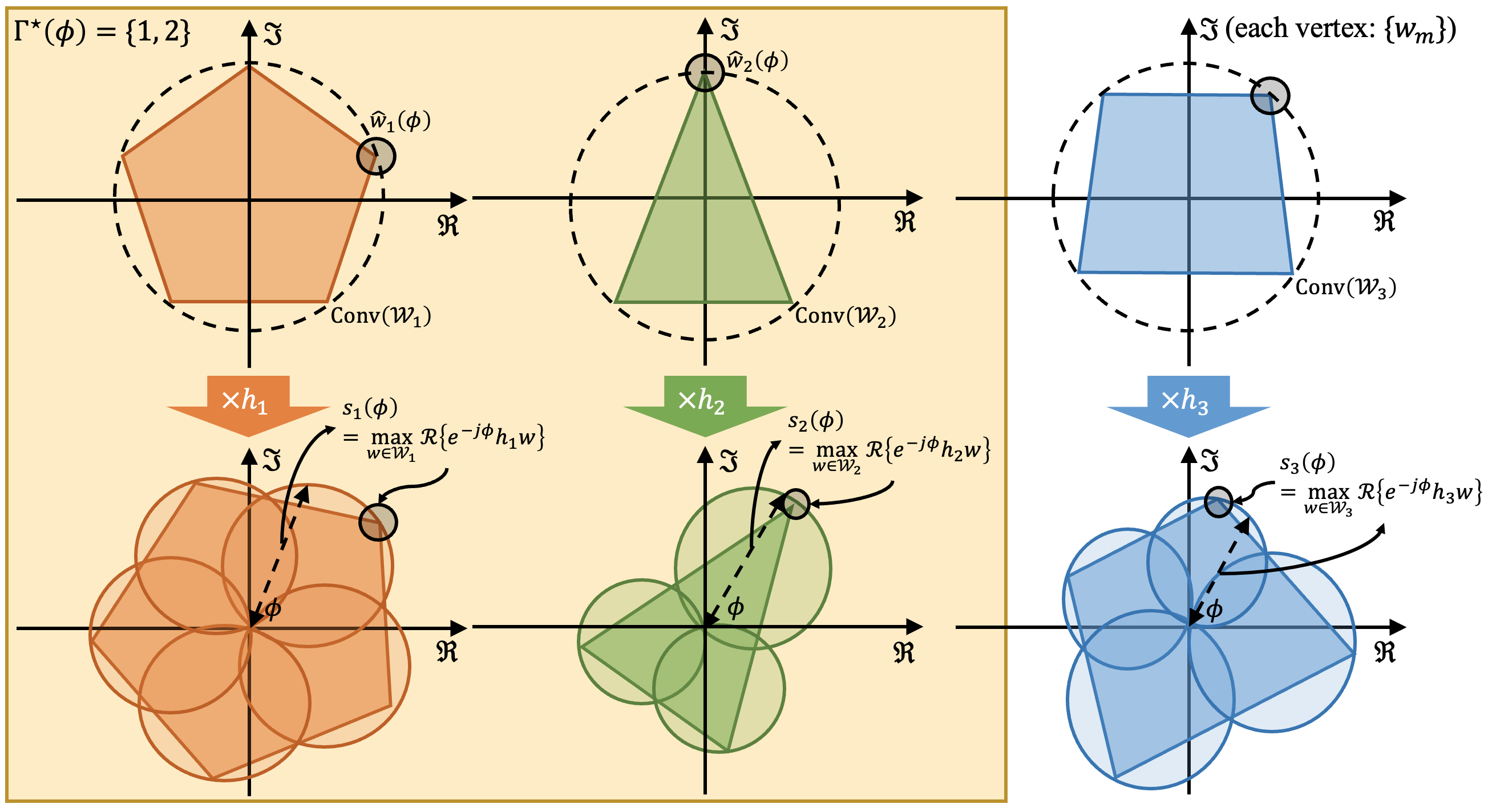}
    \caption{Geometric illustration and the resulting Top-$M_o$ port selection based on arc dominance in the Thales-circle representation with $M=3$ and $M_0=2$.}
    \label{fig_geo32}
  \end{center}
\end{figure*}

Herein, we provide a geometric interpretation of the proposed FRIS port selection algorithm, illustrated by Fig.~\ref{fig_geo32}, by viewing $s_m(\phi)$ as a directional projection over $h_m\Conv(\mathcal W_m)$. Herein, each $h_m w~(w\in\mathcal W_m)$ induces a Thales circle $e^{j\theta}\Re\{e^{-j\theta}h_m w\}~(\theta\in[0, 2\pi))$ with diameter $[0,h_m w]$, whose projection along $e^{j\phi}$ for a given $\phi$ is $\Re\{e^{-j\phi}h_m w\}$~\cite{dohminkow}. As $\phi$ varies, each vertex dominates a specific angular interval corresponding to an outer arc of its associated circle, as depicted in the lower part of Fig.~\ref{fig_geo32}. Accordingly, $s_m(\phi)$ is determined by the vertex whose arc contains $\phi$, i.e., the one that yields the largest projection in the direction $e^{j\phi}$, which coincides with the interpretation in Fig.~\ref{fig_geo2}. This geometric optimization is carried out independently for each $m$. The subsequent Top-$M_o$ operation, which yields $\Gamma^\star(\phi)$, then selects the $M_o$ ports with the largest individually optimized $s_m(\phi)$.

Therefore, the optimal FRIS configuration is obtained by the one-dimensional search with respect to $\phi$:
%\begin{equation}
%\label{eq:phi_star_geom_only_support_dir}
%\phi^\star \in \argmax_{\phi\in[0,2\pi)}h_{\mathcal P_{M_o,t}}(e^{j\phi})
%%\left[
%%\Re\{e^{-j\phi}d\}
%%+
%%\sum_{m\in\Gamma^\star(\phi)} s_m(\phi)
%%\right],
%\end{equation}
\begin{equation}
\label{eq:phi_star_geom_only_support_dir}
\phi^\star \in \argmax_{\phi\in[0,2\pi)}\left(\underbrace{\Re\{e^{-j\phi}d\}+\sum_{m\in\Gamma^\star(\phi)} s_m(\phi)}_{=h_{\mathcal P_{M_o,t}}(e^{j\phi})}\right)\\
\end{equation}
and the corresponding final FRIS coefficients are
\begin{equation}
\label{eq:w_star_geom_only}
w_m^\star=
\begin{cases}
\hat w_m(\phi^\star) & (m\in\Gamma^\star(\phi^\star)),\\
0 & (\text{otherwise}).
\end{cases}
\end{equation}
The achieved combined sum is
\begin{equation}
\label{eq:z_star_dir}
z^\star
=
d+\sum_{m\in\Gamma^\star(\phi^\star)} h_m \hat w_m(\phi^\star),
\end{equation}
and $|z^\star|$ equals the optimal value of~\eqref{prob:fris_exact_dir}. The overall FRIS configuration procedure is summarized in Algorithm~\ref{alg:geom_support_search_dir}.
\begin{algorithm}[t]
\caption{Proposed FRIS Configuration Algorithm}
\label{alg:geom_support_search_dir}
\begin{algorithmic}[1]
\Require $\{h_m\}_{m=1}^M$, $\{\mathcal W_m\}_{m=1}^M$, $M_o$, $d$
\State Partition $[0,2\pi)$ into finitely many intervals and pick one representative $\bar\phi$ per interval.
\For{each representative $\bar\phi$}
    \For{$m=1$ to $M$}
        \State \multiline{Compute $s_m(\bar{\phi})$ using~\eqref{eq:score_geom_only_repeat} (or~\eqref{eq:score_closed_form} for the regular $M_p$-gon case) by determining $w\in\mathcal W_m$ according to the procedure illustrated in Fig.~\ref{fig_geo32}.}
    \EndFor
    \State Determine $\Gamma^\star(\bar\phi)$ via~\eqref{eq:Gamma_phi_geom_only}.
    \State Evaluate $\hat w_m(\bar\phi)~(m\in\Gamma^\star(\bar\phi))$ by~\eqref{eq:w_hat_phi_geom_only}.
    \State Evaluate $h_{\mathcal P_{M_o,t}}(e^{j\bar\phi})$ by~\eqref{eq:phi_star_geom_only_support_dir}.
\EndFor
\State Choose $\phi^\star$ by~\eqref{eq:phi_star_geom_only_support_dir}.
\State Set $w_m^\star$ by~\eqref{eq:w_star_geom_only} and compute $z^\star$ by~\eqref{eq:z_star_dir}.
\State \Return $\Gamma^\star(\phi^\star)$, $\{w_m^\star\}$, and $z^\star$
\end{algorithmic}
\end{algorithm}
\section{Proposed Approach: Update of $\{w_m\}$ and $\Gamma$ (Regular $M_{p}$-gon Codebook)}
\label{subsec:regular_Mgon}
In this section, we specialize the proposed framework in Section~\ref{subsec:rigorous_solution_phi} to the practically important case where each feasible reflection set is a uniform $M_{p}$-ary phase-shifter codebook~\cite{FRISonoff, FRISsec}, namely the vertices of a regular $M_{p}$-gon on the unit circle. This specialization yields (i) closed-form expressions for $s_m(\phi)$ and the corresponding optimal codeword, and (ii) a finite set of critical angles that avoids exhaustive angular sweeping.
\subsection{Regular $M_{p}$-gon Codebook and Quantization Residual}
\label{subsubsec:Mgon_codebook}
Assume that all ports share the same $M_{p}$-ary phase-shifter codebook
\begin{equation}
\label{eq:W_Mgon}
\mathcal W_m = \mathcal W_{M_{p}}
\triangleq
\Big\{ e^{j\frac{2\pi k}{M_{p}}} \Big\}_{k=0}^{M_{p}-1}~(\forall m),
\end{equation}
where $M_{p}\ge 2$ denotes the phase resolution. We then use the principal-value mapping $\mathrm{wrap}: \mathbb R\to(-\pi,\pi]$ defined by
\begin{equation}
\label{eq:wrap_def}
\mathrm{wrap}(\phi)
\triangleq
\phi - 2\pi \left\lfloor \frac{\phi+\pi}{2\pi} \right\rfloor ,
\end{equation}
which maps $\phi$ to its unique equivalent in the principal interval $(-\pi,\pi]$, and define the phase-quantization residual
\begin{equation}
\label{eq:delta_def}
\delta_{M_{p}}(\phi)
\triangleq
\min_{k\in\{0,\cdots,M_{p}-1\}}
\Big|\mathrm{wrap}\Big(\phi-\frac{2\pi k}{M_{p}}\Big)\Big|
\in \Big[0,\frac{\pi}{M_{p}}\Big], 
\end{equation}
which quantifies the minimum angular mismatch between $\phi$ and the nearest phase point in the $M_p$-ary quantized codebook.

Thereafter, the support function of the $M_{p}$-gon can be characterized by the following theorem:
\begin{theorem}
\label{lem:support_Mgon}
For any $\phi\in\mathbb R$,
\begin{equation}
\label{eq:support_Mgon}
h_{\Conv(\mathcal W_{M_{p}})}(e^{j\phi})=\max_{w\in\mathcal W_{M_{p}}}\Re\{e^{-j\phi}w\}=\cos\big(\delta_{M_{p}}(\phi)\big).
\end{equation}
Moreover, a maximizer is given by any
\begin{equation}
\label{eq:k_star_quant}
\begin{aligned}
&\hat w_m(\phi)=e^{j\frac{2\pi}{M_{p}}k_m^\star(\phi)},\\
&k_m^\star(\phi)\in\argmin_{k\in\{0,\cdots,M_{p}-1\}}\Big|\mathrm{wrap}\Big(\phi-\frac{2\pi k}{M_{p}}\Big)\Big|.\\
\end{aligned}
\end{equation}
\end{theorem}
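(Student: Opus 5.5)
The plan is to reduce the support function of the polygon to a maximum over its vertices, and then turn that into a statement about angular distances on the circle. First, by the first identity in~\eqref{eq:support_identities_dir}, $h_{\Conv(\mathcal W_{M_p})}(e^{j\phi})=\max_{w\in\mathcal W_{M_p}}\Re\{e^{-j\phi}w\}$. This holds because a linear functional over the convex hull of a finite set attains its maximum at one of the points of that set. That gives the first equality in~\eqref{eq:support_Mgon} immediately.

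Second, for each codeword $w=e^{j2\pi k/M_p}$ we compute
\begin{equation*}
\Re\{e^{-j\phi}w\}=\cos\Big(\phi-\frac{2\pi k}{M_p}\Big)=\cos\Big(\Big|\mathrm{wrap}\Big(\phi-\frac{2\pi k}{M_p}\Big)\Big|\Big).
\end{equation*}
This uses the fact that cosine is even and $2\pi$-periodic. By~\eqref{eq:wrap_def}, the wrapped value differs from $\phi-2\pi k/M_p$ by an integer multiple of $2\pi$.

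Third, each argument $|\mathrm{wrap}(\cdot)|$ lies in $[0,\pi]$, and $\cos$ is strictly decreasing on $[0,\pi]$. Hence maximizing $\cos(|\mathrm{wrap}(\phi-2\pi k/M_p)|)$ over $k$ is equivalent to minimizing $|\mathrm{wrap}(\phi-2\pi k/M_p)|$ over $k$. This yields
\begin{equation*}
\max_{w}\Re\{e^{-j\phi}w\}=\cos\big(\delta_{M_p}(\phi)\big)
\end{equation*}
by the definition~\eqref{eq:delta_def}. It also shows that any minimizer $k_m^\star(\phi)$ in~\eqref{eq:k_star_quant} gives a maximizing codeword $\hat w_m(\phi)$; ties between two nearest grid points are admissible because they give the same cosine value. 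The same argument applies verbatim to the scaled set $h_m\Conv(\mathcal W_{M_p})$ after replacing $\phi$ by $\phi-\arg h_m$, although the statement as given does not require this.

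The main point needing care is the monotonicity step. The wrapped residual must really lie in $[0,\pi]$ for the equivalence between maximizing cosine and minimizing angular distance to be valid. This follows directly from the range $(-\pi,\pi]$ of $\mathrm{wrap}$ in~\eqref{eq:wrap_def}. For completeness, I will also justify the stated range $\delta_{M_p}\in[0,\pi/M_p]$ in~\eqref{eq:delta_def}: the grid points are spaced $2\pi/M_p$ apart on the circle, so every angle is within half that spacing of some grid point.
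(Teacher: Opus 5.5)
Your proposal is correct and follows the same route as the paper: reduce the support function of $\mathrm{Conv}(\mathcal W_{M_p})$ to a maximum over the vertices, write each term as $\cos(\phi-2\pi k/M_p)$, and identify the maximizing $k$ as the nearest grid point in principal value. You additionally spell out the step the paper leaves implicit. Since $|\mathrm{wrap}(\cdot)|\in[0,\pi]$, where cosine is strictly decreasing, maximizing the cosine is equivalent to minimizing the wrapped distance, and this is exactly what justifies the paper's nearest-grid-point claim and the maximizer characterization.
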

\begin{proof}
Since $\Conv(\mathcal W_{M_{p}})$ is the convex hull of finitely many points, its support function is attained at a vertex:
\begin{equation}
\begin{aligned}
h_{\Conv(\mathcal W_{M_{p}})}(e^{j\phi})
&=\max_{k}\Re\{e^{-j\phi}e^{j\frac{2\pi k}{M_{p}}}\}\\
&=\max_{k}\cos\Big(\phi-\frac{2\pi k}{M_{p}}\Big).
\end{aligned}
\end{equation}
The maximizing $k$ is the nearest phase grid point (in principal value), which yields $\delta_{M_{p}}(\phi)$ in~\eqref{eq:delta_def}, proving~\eqref{eq:support_Mgon}. The maximizer characterization follows directly, giving~\eqref{eq:k_star_quant}.
\end{proof}

\subsection{Port and Direct-Term Score, and Optimal Codeword}
\label{subsubsec:score_closed_form}
Let $\alpha_m\triangleq \arg(h_m)$ and $a_m\triangleq |h_m|$ for all $m$ and let $\alpha_d\triangleq \arg(d)$ and $a_d\triangleq |d|$. Then $s_m(\phi)$ and the directional term admit the following closed-form:
\begin{corollary}
\label{lem:score_closed_form}
Under~\eqref{eq:W_Mgon}, for any $\phi\in[0,2\pi)$,
\begin{equation}
\label{eq:score_closed_form}
s_m(\phi)=a_m\cos\Big(\delta_{M_{p}}(\phi-\alpha_m)\Big)~(m=1,\cdots,M).
\end{equation}
Moreover,
\begin{equation}
\label{eq:direct_term_closed_form}
\Re\{e^{-j\phi}d\}=a_d\cos(\phi-\alpha_d).
\end{equation}
An optimal FRIS codeword can be chosen as
\begin{equation}
\label{eq:w_hat_closed_form}
\begin{aligned}
&\hat w_m(\phi-\alpha_m)=e^{j\frac{2\pi}{M_{p}}k_m^\star(\phi-\alpha_m)},\\
&k_m^\star(\phi-\alpha_m)\in
\argmin_{k}
\Big|\mathrm{wrap}\Big(\phi-\alpha_m-\frac{2\pi k}{M_{p}}\Big)\Big|.\\
\end{aligned}
\end{equation}
\end{corollary}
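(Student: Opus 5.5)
The plan is to reduce the corollary to Theorem~\ref{lem:support_Mgon} by factoring out the polar form of $h_m$. Write $h_m=a_m e^{j\alpha_m}$. Starting from the definition \eqref{eq:score_geom_only_repeat} with $\mathcal W_m=\mathcal W_{M_p}$, I would rewrite each term as
\begin{equation*}
\Re\{e^{-j\phi}h_m w\}=a_m\Re\{e^{-j(\phi-\alpha_m)}w\}.
\end{equation*}
Since $a_m\ge 0$, the factor $a_m$ can be pulled out of the maximum over $w\in\mathcal W_{M_p}$. This gives
\begin{equation*}
s_m(\phi)=a_m\max_{w\in\mathcal W_{M_p}}\Re\{e^{-j(\phi-\alpha_m)}w\}.
\end{equation*}
Theorem~\ref{lem:support_Mgon} holds for every real argument, so it applies directly with $\phi$ replaced by $\phi-\alpha_m$. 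The maximum therefore equals $\cos(\delta_{M_p}(\phi-\alpha_m))$, which is \eqref{eq:score_closed_form}.

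The degenerate case $a_m=0$ needs a separate remark. There both sides of \eqref{eq:score_closed_form} vanish, and every codeword is a maximizer, so the prescribed choice \eqref{eq:w_hat_closed_form} remains valid.

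For the direct term, the same polar substitution $d=a_de^{j\alpha_d}$ gives
\begin{equation*}
\Re\{e^{-j\phi}d\}=a_d\Re\{e^{j(\alpha_d-\phi)}\}=a_d\cos(\phi-\alpha_d),
\end{equation*}
which is \eqref{eq:direct_term_closed_form}.

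For the optimal codeword, I would note that the maximizer sets in \eqref{eq:w_hat_phi_geom_only} and in Theorem~\ref{lem:support_Mgon} coincide after the shift by $\alpha_m$. This is because multiplying by $a_m>0$ does not change the argmax. The maximizer characterization \eqref{eq:k_star_quant}, evaluated at $\phi-\alpha_m$, then yields \eqref{eq:w_hat_closed_form} directly.

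The only step needing care is that $\mathrm{wrap}$ and $\delta_{M_p}$ are applied to the shifted angle $\phi-\alpha_m$, which may lie outside $[0,2\pi)$. I would handle this by observing that both $\mathrm{wrap}$ and $\delta_{M_p}$ are $2\pi$-periodic, and that Theorem~\ref{lem:support_Mgon} is stated for all $\phi\in\mathbb R$. With that observation no genuine obstacle remains; the proof is a direct substitution.
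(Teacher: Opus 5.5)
Your proposal is correct and follows essentially the same route as the paper. You write $h_m=a_me^{j\alpha_m}$, pull $a_m\ge 0$ out of the maximum, apply Theorem~\ref{lem:support_Mgon} at the shifted angle $\phi-\alpha_m$ for both the value and the maximizer, and use the polar form of $d$ for the direct term; your extra remarks on $a_m=0$ and on the theorem holding for all real arguments are small refinements the paper leaves implicit.
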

\begin{proof}
By definition,
\begin{equation}
\begin{aligned}
s_m(\phi)
&=\max_{w\in\mathcal W_{M_{p}}}\Re\{e^{-j\phi}h_m w\}\\
&=\max_{w\in\mathcal W_{M_{p}}}\Re\{e^{-j(\phi-\alpha_m)}a_m w\}\\
&=a_m\max_{w\in\mathcal W_{M_{p}}}\Re\{e^{-j(\phi-\alpha_m)}w\}.
\end{aligned}
\end{equation}
Applying Theorem~\ref{lem:support_Mgon} with $\phi\leftarrow\phi-\alpha_m$ yields~\eqref{eq:score_closed_form}, and the maximizer follows from~\eqref{eq:k_star_quant}, proving~\eqref{eq:w_hat_closed_form}. The direct-term identity follows from $d=a_d e^{j\alpha_d}$ and $\Re\{e^{-j\phi}d\}=a_d\cos(\phi-\alpha_d)$, proving~\eqref{eq:direct_term_closed_form}.
\end{proof}
\begin{figure*}[t]
  \begin{center}
    \includegraphics[width=1.75\columnwidth,keepaspectratio]{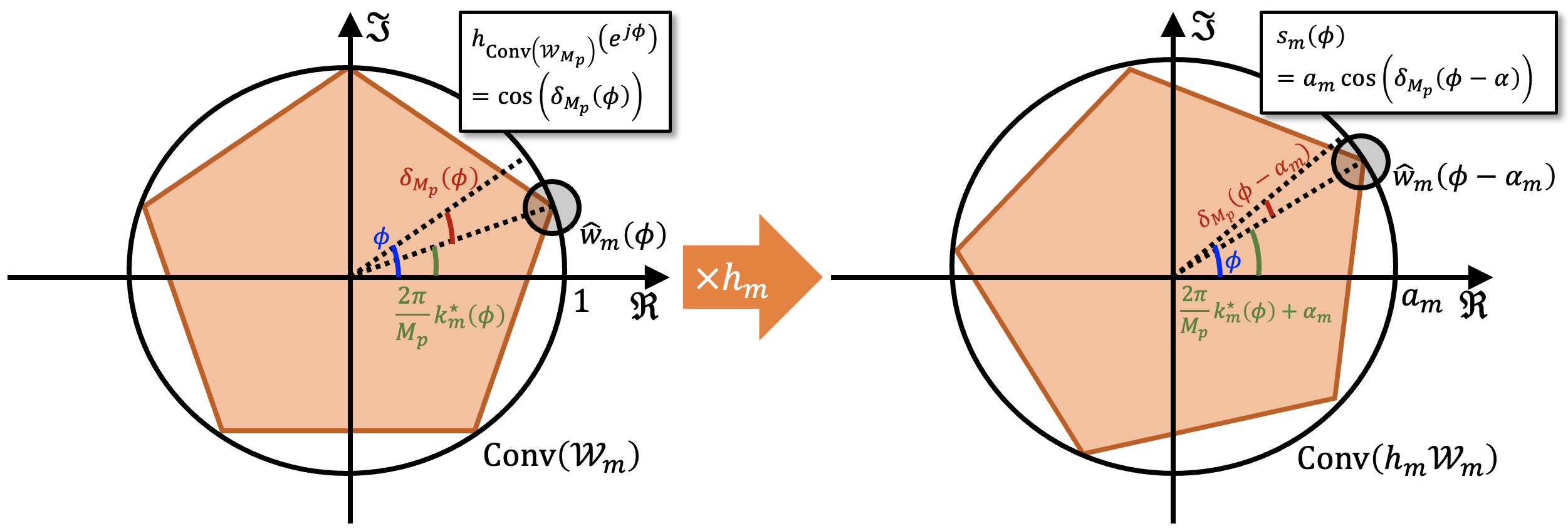}
    \caption{Geometric interpretation of $h_{\mathrm{Conv}(\mathcal W_{M_p})}(e^{j\phi})$ and $s_m(\phi)$ for a regular $M_p$-gon phase codebook and the resulting per-port score with $M_p=5$. The left figure depicts $\mathrm{Conv}(\mathcal W_{M_p})$ of the unit-modulus $M_p$-ary phase-shifter codebook and the evaluation of $h_{\mathrm{Conv}(\mathcal W_{M_p})}(e^{j\phi})=\cos(\delta_{M_p}(\phi))$ along direction $e^{j\phi}$, while the right figure illustrates the scaled and rotated convex set $\mathrm{Conv}(h_m\mathcal W_{M_p})$, where multiplication by $h_m=a_m e^{j\alpha_m}$ induces a rotation by $\alpha_m$ and a radial scaling by $a_m$, resulting in $s_m(\phi)=a_m\cos(\delta_{M_p}(\phi-\alpha_m))$.}
    \label{fig_mgonsm}
  \end{center}
\end{figure*}
The geometrical illustration of deriving $h_{\mathrm{Conv}(\mathcal W_{M_p})}(e^{j\phi})$ and $s_m(\phi)$ is given in Fig.~\ref{fig_mgonsm}. Herein, the support function is interpreted as the maximum projection of the convex hull of the discrete phase codebook onto the direction $e^{j\phi}$, which is attained at the codeword whose phase is closest to $\phi$, leading to the cosine form governed by $\delta_{M_p}(\cdot)$. When incorporating $h_m$, this convex set undergoes a rotation by $\alpha_m$ and a scaling by $a_m$, so that the resulting $s_m(\phi)$ naturally captures both channel magnitude and phase quantization effects in a unified geometric manner.

With Corollary~\ref{lem:score_closed_form}, $s_m(\phi)$ and $\hat w_m(\phi-\alpha_m)$ are available for $\phi\in[0,2\pi)$, and the objective in~\eqref{eq:phi_star_geom_only_support_dir} becomes
\begin{equation}
\label{eq:phi_obj_closed_form_dir}
h_{\mathcal P_{M_o,t}}(e^{j\phi})=a_d\cos(\phi-\alpha_d)+\sum_{m\in\Gamma^\star(\phi)}
a_m\cos\Big(\delta_{M_{p}}(\phi-\alpha_m)\Big),
\end{equation}
where $\Gamma^\star(\phi)=\mathrm{Top}\text{-}M_o(\{s_m(\phi)\})$.
Thus, for each $\phi$, we can evaluate $h_{\mathcal P_{M_o,t}}(e^{j\phi})$ with $\Gamma^\star(\phi)$ and $\hat w_m(\phi-\alpha_m)~(m\in\Gamma^\star(\phi))$ by~\eqref{eq:w_hat_closed_form}, and assigning $\phi^\star$ and $w_m^\star$ as in~\eqref{eq:phi_star_geom_only_support_dir} and~\eqref{eq:w_star_geom_only}, respectively.

\subsection{Maximization of $h_{\mathcal P_{M_o,t}}(e^{j\phi})$ via a Finite Candidate Set}
\label{sec:finite_max_phi}
For evaluation of $h_{\mathcal P_{M_o,t}}(e^{j\phi})$ in~\eqref{eq:phi_obj_closed_form_dir}, although it is a sum of cosine-type terms, the dependence of (i) $k_m^\star(\phi-\alpha_m)$ embedded in $\delta_{M_p}$ and (ii) $\Gamma^\star(\phi)$ on $\phi$ makes $h_{\mathcal P_{M_o,t}}(e^{j\phi})$ {piecewise smooth} with finitely many nondifferentiable points. In this subsection, we show that the global maximizer can be obtained by checking only a {finite} set of candidate angles, thereby avoiding generic one-dimensional search.
\subsubsection{Quantization-Induced Linearization}
\label{subsec:quant_linearization}
For fix $m$, over any interval $\mathcal I$ where $k_m^\star(\phi-\alpha_m)$ stays constant, say $k_m^\star(\phi-\alpha_m)\equiv k_{m}$, we have
\begin{equation}
\label{eq:quant_piecewise_cos}
\cos\Big(\delta_{M_p}(\phi-\alpha_m)\Big)=\cos\Big(\phi-\alpha_m-\tfrac{2\pi k_m}{M_p}\Big)~(\phi\in\mathcal I),
\end{equation}
because the principal-value difference to the nearest grid point is simply the signed offset inside that cell. Accordingly, on such an $\mathcal I$, each per-port term becomes an ordinary cosine with a fixed phase shift. The $k_m$-switching (boundary) angles for port $m$ occur when $\phi-\alpha_m$ lies on the midpoints between adjacent grid points, which correspond to the midpoints of the vertices of given regular $M_p$-gon:
\begin{equation}
\label{eq:quant_boundary_set}
\Phi_m^{\mathrm{q}}
\triangleq
\Big\{\alpha_m+\tfrac{2\pi k}{M_p}\pm\tfrac{\pi}{M_p}~ (\mathrm{mod}\ 2\pi): k=0,\cdots,M_p-1\Big\}.
\end{equation}
Let $\Phi^{\mathrm{q}}\triangleq \bigcup_{m=1}^M \Phi_m^{\mathrm{q}}$ denote the global set of quantization breakpoints.
\subsubsection{Top-$M_o$ Switching and Score-Crossing Angles}
\label{subsec:top_switching}
Even if the quantizer indices are fixed, $\Gamma^\star(\phi)$ can switch when the ordering of $\{s_m(\phi)\}$ changes. A switch can only happen at $\phi$ where two scores tie:
\begin{equation}
\label{eq:score_tie_def}
s_m(\phi)=s_n(\phi)~ (m\neq n).
\end{equation}
Since each $s_m(\phi)$ is piecewise cosine by~\eqref{eq:quant_piecewise_cos}, within any region where all involved $k_m^\star(\phi-\alpha_m)$ and $k_n^\star(\phi-\alpha_n)$ are fixed,
\eqref{eq:score_tie_def} reduces to
\begin{equation}
\label{eq:score_tie_cos}
a_m\cos(\phi-\beta_m)=a_n\cos(\phi-\beta_n),
\end{equation}
with $\beta_i\triangleq \alpha_i+\tfrac{2\pi k_i}{M_p}~(i=m,n)$ fixed. Equation~\eqref{eq:score_tie_cos} is equivalent to a linear constraint in $\cos\phi$ and $\sin\phi$:
\begin{equation}
\label{eq:score_tie_linear}
\begin{aligned}
&\big(a_m\cos\beta_m-a_n\cos\beta_n\big)\cos\phi\\
&+\big(a_m\sin\beta_m-a_n\sin\beta_n\big)\sin\phi=0,
\end{aligned}
\end{equation}
whose solutions (if any) consist of at most two angles in $[0,2\pi)$. Collect all such score-crossing angles over all pairs $(m,n)~(m\neq n)$ into a finite set $\Phi^{\mathrm{top}}$, which correspond to points at which the relative ordering between $s_m(\phi)$ and $s_n(\phi)$ changes and may therefore alter $\Gamma^\star(\phi)$.
\subsubsection{Piecewise Sinusoid Structure and A Finite Maximizer Test}
\label{subsec:piecewise_sinusoid}
Define the global breakpoint set
\begin{equation}
\label{eq:Phi_break_def}
\Phi_{\mathrm{br}}\triangleq\Phi^{\mathrm{q}}\cup \Phi^{\mathrm{top}},
\end{equation}
and sort them over $[0,2\pi)$ to obtain a partition into finitely many open intervals $\mathcal I_\ell\triangleq(\varphi_\ell,\varphi_{\ell+1})$ and each interval contains no breakpoint.
\begin{lemma}
\label{lem:fixed_structure_region}
On $\mathcal I_\ell$, all $\{k_m^\star(\phi-\alpha_m)\}_{m=1}^M$ and $\Gamma^\star(\phi)$ remain constant.
\end{lemma}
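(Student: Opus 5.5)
I will prove the two claims in order, first the constancy of the quantizer indices and then the constancy of $\Gamma^\star(\phi)$, using only that $\mathcal I_\ell$ is connected and contains no element of $\Phi_{\mathrm{br}}$.

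\textbf{Step 1 (quantizer indices).} Fix $m$. The map $\phi\mapsto\mathrm{wrap}(\phi-\alpha_m-2\pi k/M_p)$ is continuous except where its argument is an odd multiple of $\pi$. The nearest-grid index $k_m^\star(\phi-\alpha_m)$ in~\eqref{eq:k_star_quant} can change only where two grid points are equidistant from $\phi-\alpha_m$, that is, only when $\phi-\alpha_m$ is a midpoint $\frac{2\pi k}{M_p}\pm\frac{\pi}{M_p}$. These are exactly the points of $\Phi_m^{\mathrm q}\subseteq\Phi^{\mathrm q}\subseteq\Phi_{\mathrm{br}}$ in~\eqref{eq:quant_boundary_set}. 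Off these points the minimizer in~\eqref{eq:k_star_quant} is unique, and the distances to all grid points are continuous. A strict inequality between the smallest and second-smallest distance therefore persists locally, so $k_m^\star$ is locally constant on $\mathcal I_\ell$. Since $\mathcal I_\ell$ is connected, $k_m^\star$ is constant there, say $k_m$. Consequently~\eqref{eq:quant_piecewise_cos} holds on all of $\mathcal I_\ell$, and $s_m(\phi)=a_m\cos(\phi-\beta_m)$ with $\beta_m=\alpha_m+2\pi k_m/M_p$ fixed and continuous.

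\textbf{Step 2 (Top-$M_o$ set).} For each pair $m\neq n$, consider $g_{mn}(\phi)\triangleq s_m(\phi)-s_n(\phi)$, which on $\mathcal I_\ell$ equals the fixed sinusoid on the left of~\eqref{eq:score_tie_linear}.

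\emph{Case 1: the sinusoid does not vanish identically.} Its zeros are among the at most two angles collected into $\Phi^{\mathrm{top}}$. Hence $g_{mn}$ has no zero on $\mathcal I_\ell$, and by continuity and the intermediate value theorem it has constant sign there.

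\emph{Case 2: the sinusoid vanishes identically.} This happens when $a_me^{j\beta_m}=a_ne^{j\beta_n}$, so $s_m\equiv s_n$ on $\mathcal I_\ell$. Such a pair is tied throughout, and its order never changes.

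In both cases the strict/weak ordering of $\{s_m(\phi)\}$ is the same at every $\phi\in\mathcal I_\ell$. The Top-$M_o$ set is a function of this ordering alone, so $\Gamma^\star(\phi)$ is constant on $\mathcal I_\ell$. For this to hold with ties, I fix a deterministic tie-breaking rule, for instance preferring smaller index, in~\eqref{eq:Gamma_phi_geom_only}.

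\textbf{Main obstacle.} The delicate point is the degenerate identically-tied pairs and the tie-breaking convention. Ties are allowed to be broken arbitrarily, so without a fixed rule $\Gamma^\star$ could be chosen inconsistently. I will handle this by stating the index-based tie-breaking explicitly. I will also note that the objective value $\sum_{m\in\Gamma^\star}s_m$ is independent of how ties are broken, which is what the subsequent finite-maximizer argument needs.

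A secondary subtlety is that $\Phi^{\mathrm{top}}$ was defined region-by-region through the fixed $k_m,k_n$. I need the crossing angles computed with the indices valid on $\mathcal I_\ell$ to be included in $\Phi^{\mathrm{top}}$. I will interpret $\Phi^{\mathrm{top}}$ as the union over all index combinations, which is finite since there are $M_p^2$ combinations per pair, so this holds.
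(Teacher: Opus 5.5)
Your proposal is correct and takes the same two-step route as the paper's proof: the absence of quantization breakpoints in $\mathcal I_\ell$ keeps each $k_m^\star(\phi-\alpha_m)$ constant, and the absence of score-tie points keeps the ordering of $\{s_m(\phi)\}$ constant, so the Top-$M_o$ set is constant. Your local-constancy and intermediate-value arguments, your treatment of identically tied pairs with $a_m e^{j\beta_m}=a_n e^{j\beta_n}$, and your fixed tie-breaking rule spell out details that the paper's one-line proof leaves implicit; in particular, the paper's claim that each tie equation has at most two solutions silently excludes that degenerate case.
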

\begin{proof}
By construction, $\mathcal I_\ell$ contains no quantization breakpoint, so each $k_m^\star(\phi-\alpha_m)$ cannot change inside $\mathcal I_\ell$, establishing index constancy. Likewise, $\mathcal I_\ell$ contains no score-tie point, so the ordering of $\{s_m(\phi)\}$ cannot change, and hence $\Gamma^\star(\phi)=\mathrm{Top}\text{-}M_o(\{s_m(\phi)\})$ remains constant. 
\end{proof}
On a fixed $\mathcal I_\ell$, denote the constant active set by $\Gamma_\ell$ and the constant quantized phases by $\beta_{m,\ell}\triangleq \alpha_m+\tfrac{2\pi k_{m,\ell}}{M_p}~(m\in\Gamma_\ell)$. Then, by~\eqref{eq:quant_piecewise_cos}, the objective on $\mathcal I_\ell$ becomes a {single sinusoid plus a constant}:
\begin{equation}
\label{eq:region_sinusoid_form}
\begin{aligned}
h_{\mathcal P_{M_o,t}}(e^{j\phi})
&=
a_d\cos(\phi-\alpha_d)
+\sum_{m\in\Gamma_\ell} a_m\cos(\phi-\beta_{m,\ell})\\
&=\Re\left\{e^{-j\phi}\Big(d+\sum_{m\in\Gamma_\ell} a_m e^{j\beta_{m,\ell}}\Big)\right\}\\
&=\Re\left\{e^{-j\phi} C_\ell\right\}~(\phi\in\mathcal I_\ell),
\end{aligned}
\end{equation}
where we define the region-wise complex resultant
\begin{equation}
\label{eq:Cell_def}
C_\ell \triangleq d+\sum_{m\in\Gamma_\ell} a_m e^{j\beta_{m,\ell}}
=|C_\ell|e^{j\angle C_\ell}.
\end{equation}

\begin{lemma}
\label{lem:regional_maximizer}
On $\mathcal I_\ell$, the maximizer of~\eqref{eq:region_sinusoid_form} is attained either (i) at the interior stationary point $\phi=\angle C_\ell$ if $\angle C_\ell\in\mathcal I_\ell$, or (ii) at an endpoint $\varphi_\ell$ or $\varphi_{\ell+1}$.
\end{lemma}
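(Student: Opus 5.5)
The plan is to use the closed-form representation \eqref{eq:region_sinusoid_form}. By Lemma~\ref{lem:fixed_structure_region}, the objective on $\mathcal I_\ell$ coincides with the single sinusoid
\begin{equation*}
g_\ell(\phi)\triangleq \Re\{e^{-j\phi}C_\ell\}=|C_\ell|\cos(\phi-\angle C_\ell),
\end{equation*}
which is smooth on all of $\mathbb R$. The argument then reduces to elementary one-variable calculus applied to $g_\ell$ on the closure $[\varphi_\ell,\varphi_{\ell+1}]$.

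First, I will pass to the closed interval. Since $h_{\mathcal P_{M_o,t}}(e^{j\phi})$ is a maximum of finitely many continuous functions of $\phi$, as seen from \eqref{eq:support_PMo_dir_final} with continuous $s_m$, it is continuous. It therefore agrees with $g_\ell$ at the endpoints $\varphi_\ell,\varphi_{\ell+1}$ as well, by taking limits from inside $\mathcal I_\ell$. By Weierstrass, $g_\ell$ attains its maximum over the compact set $[\varphi_\ell,\varphi_{\ell+1}]$.

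Next, I will do the case split.
\begin{itemize}
\item If $C_\ell=0$, then $g_\ell\equiv 0$ and any point, in particular an endpoint, is a maximizer.
\item If $C_\ell\neq 0$ and the maximizer $\phi_0$ lies in the open interval, then $g_\ell'(\phi_0)=-|C_\ell|\sin(\phi_0-\angle C_\ell)=0$. This forces $\phi_0\equiv\angle C_\ell$ or $\phi_0\equiv\angle C_\ell+\pi \pmod{2\pi}$. The second option gives $g_\ell(\phi_0)=-|C_\ell|$, the global minimum of the sinusoid. Since $g_\ell$ is nonconstant, this cannot be a maximum over an interval of positive length unless the interval were a single point, which it is not. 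Hence $\phi_0=\angle C_\ell$, and this point lies in $\mathcal I_\ell$, which is case (i). By Lemma~\ref{lem:polar}, it is in fact the global maximizer of $g_\ell$, with value $|C_\ell|$.
\item Otherwise, the maximum is attained at an endpoint, which is case (ii).
\end{itemize}

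The only delicate point is the endpoint issue: $\mathcal I_\ell$ is open, so "maximizer on $\mathcal I_\ell$" must be read as the supremum over its closure. This is where continuity of the support function in $\phi$ is needed, and I expect it to be the main obstacle in making the statement rigorous. A secondary care point is interpreting $\angle C_\ell\in\mathcal I_\ell$ modulo $2\pi$ when the interval wraps around $0$. I would handle this by representing angles on the circle.
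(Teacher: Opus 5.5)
Your proposal is correct and follows essentially the same route as the paper: on $\mathcal I_\ell$ the objective is the single sinusoid $|C_\ell|\cos(\phi-\angle C_\ell)$, and one-variable calculus places its maximum either at $\angle C_\ell$ or at an endpoint. Your argument is tighter than the paper's in three places: the paper says $\angle C_\ell$ is the only stationary point per period and overlooks the minimizer $\angle C_\ell+\pi$, which you rule out; it omits the degenerate case $C_\ell=0$; and it never justifies taking endpoint values of an open interval, which you handle through continuity of $h_{\mathcal P_{M_o,t}}(e^{j\phi})$.
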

\begin{proof}
From~\eqref{eq:region_sinusoid_form}, $h(\phi)=|C_\ell|\cos(\phi-\angle C_\ell)$ on $\mathcal I_\ell$. The function is smooth and unimodal between consecutive extrema, and its only stationary point in a $2\pi$-period is at $\phi=\angle C_\ell$ (mod $2\pi$). Therefore, the maximum over an open interval is attained at the stationary point if it lies inside; otherwise it is attained at the boundary of the interval. 
\end{proof}
\begin{theorem}
\label{thm:finite_candidate_global}
Define the finite candidate set
\begin{equation}
\label{eq:Phi_cand_def}
\Phi_{\mathrm{cand}}
\triangleq
\Phi_{\mathrm{br}}\cup \Big\{\angle C_\ell : \angle C_\ell\in \mathcal I_\ell \Big\}.
\end{equation}
Then the global maximizer of $h_{\mathcal P_{M_o,t}}(e^{j\phi})$ over $\phi\in[0,2\pi)$ satisfies
\begin{equation}
\label{eq:finite_max_rule}
\phi^\star \in \argmax_{\phi\in[0,2\pi)} h_{\mathcal P_{M_o,t}}(e^{j\phi})
\subseteq
\argmax_{\phi\in \Phi_{\mathrm{cand}}} h_{\mathcal P_{M_o,t}}(e^{j\phi}).
\end{equation}
\end{theorem}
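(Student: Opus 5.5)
The plan is to combine the partition of $[0,2\pi)$ induced by $\Phi_{\mathrm{br}}$ with Lemma~\ref{lem:fixed_structure_region} and Lemma~\ref{lem:regional_maximizer}. I first establish that a global maximizer exists and that $h(\phi)\triangleq h_{\mathcal P_{M_o,t}}(e^{j\phi})$ is continuous. Continuity follows from \eqref{eq:support_PMo_dir_final}. Each $s_m(\phi)=\max_{w}\Re\{e^{-j\phi}h_m w\}$ is a finite maximum of continuous functions, hence continuous. The Top-$M_o$ sum $\max_{|\Gamma|=M_o}\sum_{m\in\Gamma}s_m(\phi)$ is again a finite maximum of continuous functions. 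Adding the continuous term $a_d\cos(\phi-\alpha_d)$ preserves continuity. Viewing $h$ as a continuous function on the circle (compact), the maximum over $[0,2\pi)$ is attained.

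Let $\phi^\star$ be any global maximizer. If $\phi^\star\in\Phi_{\mathrm{br}}$, it lies in $\Phi_{\mathrm{cand}}$ and we are done. Otherwise $\phi^\star$ lies in some open interval $\mathcal I_\ell$. The wrap-around interval between the largest and smallest breakpoints is treated as one arc of the circle. By Lemma~\ref{lem:fixed_structure_region}, on $\mathcal I_\ell$ we have $h(\phi)=\Re\{e^{-j\phi}C_\ell\}=|C_\ell|\cos(\phi-\angle C_\ell)$ via \eqref{eq:region_sinusoid_form}.

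Since $\phi^\star$ is an interior maximizer of this smooth function, the derivative vanishes there. Two cases arise.
\begin{itemize}
\item If $C_\ell\neq 0$, then $\phi^\star\equiv\angle C_\ell$ or $\phi^\star\equiv\angle C_\ell+\pi$. The latter is a minimum of the sinusoid, so it cannot be an interior local maximum unless $|C_\ell|=0$. Hence $\phi^\star=\angle C_\ell\in\mathcal I_\ell$, which belongs to $\Phi_{\mathrm{cand}}$ by \eqref{eq:Phi_cand_def}.
\item If $C_\ell=0$, then $h\equiv 0$ on $\mathcal I_\ell$. By continuity the same value is attained at the endpoint $\varphi_\ell\in\Phi_{\mathrm{br}}$.
\end{itemize}
In either case $\max_{\Phi_{\mathrm{cand}}}h\ge h(\phi^\star)$. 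The reverse inequality is trivial, so the maxima coincide and every candidate achieving it is a global maximizer, giving \eqref{eq:finite_max_rule}.

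The main obstacle is making Lemma~\ref{lem:fixed_structure_region} airtight. The set $\Phi^{\mathrm{top}}$ is built from tie equations \eqref{eq:score_tie_cos} using index pairs that are fixed "within a region". One must therefore ensure it is constructed for every quantization cell, so that it captures all genuine crossings. A related subtlety is degenerate ties, where $s_m\equiv s_n$ on a whole interval and \eqref{eq:score_tie_linear} becomes trivial. I would handle this by noting that the objective value $\max_\Gamma\sum s_m$ does not depend on tie-breaking, so the sinusoid representation with some fixed $\Gamma_\ell$ remains valid. Alternatively, the continuity argument above already shows that endpoint values control the supremum over the closure. With this, the exact value of $h$ on $\mathcal I_\ell$ equals $\Re\{e^{-j\phi}C_\ell\}$ regardless of which tied set is chosen, and the argument goes through.
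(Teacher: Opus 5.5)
Your proof is correct and follows essentially the paper's route. On each breakpoint-free arc $\mathcal I_\ell$ the objective reduces, by Lemma~\ref{lem:fixed_structure_region}, to the single sinusoid $\Re\{e^{-j\phi}C_\ell\}$, so a maximizer is either a breakpoint or $\angle C_\ell$. Your first-order argument at an actual global maximizer, together with the explicit continuity of $h$, also avoids a step the paper's Lemma~\ref{lem:regional_maximizer} leaves implicit: that the supremum of the sinusoid approached at an endpoint of the open arc $\mathcal I_\ell$ equals $h$ at that breakpoint.

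The one slip is in your last line. Equality of the two maxima gives $\argmax_{\Phi_{\mathrm{cand}}}h\subseteq\argmax_{[0,2\pi)}h$, which is the reverse of the stated inclusion. The stated inclusion instead follows because your case analysis already places every global maximizer in $\Phi_{\mathrm{cand}}$. The subcase $C_\ell=0$ can contain a global maximizer only in the trivial situation $h\equiv 0$.
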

\begin{proof}
By Lemma~\ref{lem:fixed_structure_region}, on each $\mathcal I_\ell$ the objective has the sinusoidal form~\eqref{eq:region_sinusoid_form}. By Lemma~\ref{lem:regional_maximizer}, the maximum over $\mathcal I_\ell$ is achieved either at $\angle C_\ell$ or at an endpoint, and every endpoint belongs to $\Phi_{\mathrm{br}}$. Hence, the maximum over each region is attained at some point in $\Phi_{\mathrm{cand}}$. Taking the maximum over all regions yields~\eqref{eq:finite_max_rule}.
\end{proof}
Theorem~\ref{thm:finite_candidate_global} reduces the continuous maximization of $h_{\mathcal P_{M_o,t}}(e^{j\phi})$ to evaluating finitely many candidates. A practical procedure is summarized in Algorithm~\ref{alg:finite_candidates}.
\begin{remark}
\label{rem:phi_top_impl}
In implementation of finding $\Phi^{\mathrm{top}}$, it is not necessary to explicitly solve all pairwise crossing equations. Instead, we construct $\Phi^{\mathrm{top}}$ by inspecting only the boundaries of $\{\mathcal I_\ell\}$ and detecting neighboring score ties that induce a local change in the Top-$M_o$ ordering. This is sufficient because within each $\mathcal I_\ell$ all score functions are smooth and their relative ordering remains invariant, so that any ordering change must occur at an interval boundary.
\end{remark}

\begin{algorithm}[t]
\caption{Maximizing $h_{\mathcal P_{M_o,t}}(e^{j\phi})$ for a Regular $M_p$-gon}
\label{alg:finite_candidates}
\begin{algorithmic}[1]
\Require $\{a_m,\alpha_m\}_{m=1}^M$, $a_d$, $\alpha_d$, $M_o$, $M_p$
\State Construct $\Phi^{\mathrm q}=\bigcup_{m=1}^M \Phi_m^{\mathrm q}$ by~\eqref{eq:quant_boundary_set}.
\State Initialize $\Phi^{\mathrm{top}}\gets\emptyset$.
\State Sort $\Phi^{\mathrm q}$ and form $\{\mathcal I_\ell\}_\ell$.
\For{each interval $\mathcal I_\ell$}
    \State Pick any $\bar\phi\in\mathcal I_\ell$.
    \State Compute $s_m(\bar\phi)=a_m\cos(\delta_{M_p}(\bar\phi-\alpha_m))~(\forall m)$.
    \State Determine $\Gamma_\ell=\mathrm{Top}\text{-}M_o(\{s_m(\bar\phi)\})$.
    \State \multiline{Detect changes in the ordering of $\{s_m(\bar\phi)\}$ by examining~\eqref{eq:score_tie_def} that occur at the boundaries of $\mathcal I_\ell$.}
    \State Append the corresponding angles into $\Phi^{\mathrm{top}}$.
    \State Compute $C_\ell$ by~\eqref{eq:Cell_def}.
\EndFor
\State Set $\Phi_{\mathrm{cand}}\gets \Phi^{\mathrm q}\cup \Phi^{\mathrm{top}}\cup\{\angle C_\ell\in\mathcal I_\ell\}$.
\State Evaluate $h_{\mathcal P_{M_o,t}}(e^{j\phi})$ at all $\phi\in\Phi_{\mathrm{cand}}$ using~\eqref{eq:phi_obj_closed_form_dir}.
\State Output $\phi^\star\in\argmax_{\phi\in\Phi_{\mathrm{cand}}} h_{\mathcal P_{M_o,t}}(e^{j\phi})$.
\State \Return $\phi^\star$
\end{algorithmic}
\end{algorithm}
\section{Proposed AO Framework}
\label{sec:AO_framework}
Based on the results, we develop an AO framework to solve~\eqref{prob:fris_exact_dir}. The proposed AO alternates between (A) an update of $\mathbf f_{\mathrm B}$ and (B) an optimal FRIS update. Starting from an initial feasible $\mathbf w^{(0)}$ and with iteration $t$, the AO iterates between the following two subproblems.
\subsection{Update of $\mathbf f_{\mathrm B}$}
\label{subsec:AO_stepA}
For a fixed $\mathbf w^{(t)}$, the update of $\mathbf f_{\mathrm B}$ is by~\eqref{eq:f_star_MRT} given by
\begin{equation}
\label{eq:f_update_MRT_AO}
\mathbf f_{\mathrm B}^{(t+1)}
=
\sqrt{P}
\frac{\mathbf a(\mathbf w^{(t)})}{\|\mathbf a(\mathbf w^{(t)})\|_2}.
\end{equation}
\subsection{Update of $\mathbf w$ and $\Gamma$}
\label{subsec:AO_stepB}
For a fixed $\mathbf f_{\mathrm B}^{(t+1)}$, we can solve the following FRIS subproblem:
\begin{equation}
\label{prob:w_sub_AO}
\begin{aligned}
\max_{\{w_m\},\Gamma}~&
\left|
d^{(t+1)}+\sum_{m\in\Gamma} h_m^{(t+1)} w_m
\right|\\
\text{s.t.}~&
w_m\in\mathcal W_m~(m\in\Gamma),~|\Gamma|=M_o,
\end{aligned}
\end{equation}
where
\begin{equation}
\label{eq:dh_def_AO}
d^{(t+1)}
\triangleq
\mathbf h_{\mathrm d}^{*}\mathbf f_{\mathrm B}^{(t+1)},~
h_m^{(t+1)}
\triangleq
h_{\mathrm r,m}^{*}\mathbf G_{m,:}\mathbf f_{\mathrm B}^{(t+1)}~(\forall m).
\end{equation}
We apply the proposed framework in Section~\ref{subsec:rigorous_solution_phi} to obtain an update $(\Gamma^{(t+1)},\mathbf w^{(t+1)})$. The resulting AO procedure is summarized in Algorithm~\ref{alg:AO_joint}. Since Section~\ref{subsec:AO_stepA} globally maximizes the objective over $\mathbf f_{\mathrm B}$ for fixed $\mathbf w^{(t)}$, and~\ref{subsec:AO_stepB} optimizes the objective over $(\mathbf w,\Gamma)$ for fixed $\mathbf f_{\mathrm B}^{(t+1)}$, the objective sequence is nondecreasing and upper-bounded by system framework (i.e., $|z|\le |d|+\sum_{m\in\Gamma}|h_m|$), and thus converges.

\begin{algorithm}[t]
\caption{Proposed AO Framework}
\label{alg:AO_joint}
\begin{algorithmic}[1]
\Require $\mathbf h_{\mathrm d}$, $\mathbf h_{\mathrm r}$, $\mathbf G$, $\{\mathcal W_m\}_{m=1}^M$, $M_o$, $P$, $\epsilon$.
\State Initialize $\mathbf w^{(0)}$ with $|\{m:w_m^{(0)}\neq 0\}|=M_o$; set $t=0$.
\Repeat
    \State Set $\mathbf f_{\mathrm B}^{(t+1)}$ by~\eqref{eq:f_update_MRT_AO}.
    \State Compute $d^{(t+1)}$ and $\{h_m^{(t+1)}\}$ by~\eqref{eq:dh_def_AO}.
    \State \multiline{Solve~\eqref{prob:w_sub_AO} to obtain ($\Gamma^{(t+1)}$, $\mathbf w^{(t+1)}$) and the  corresponding $t$th objective value $|z^{\star(t+1)}|$~($z^{(0)}=0$).}
    \State $t\leftarrow t+1$.
\Until{$||z^{\star(t)}|-|z^{\star(t-1)}|\le\epsilon$}
\State \Return $\mathbf f_{\mathrm B}^{(t)}$, $\Gamma^{(t)}$, $\mathbf w^{(t)}$
\end{algorithmic}
\end{algorithm}
\section{Computational Complexity}
\label{ccmp}
We analyze the computational complexity of the proposed AO framework by examining the per-iteration cost of each update step. In each AO iteration, the update of $\mathbf f_{\mathrm B}$ is performed via the closed-form MRT solution in~\eqref{eq:f_update_MRT_AO}. This step requires computing $\mathbf a$ and its Euclidean norm, which involves matrix-vector multiplications of size $M\times N$ and therefore incurs a complexity of $\mathcal O(MN)$. Subsequently, the computation of $d=\mathbf h_{\mathrm d}^*\mathbf f_{\mathrm B}$ and $\{h_m=h_{\mathrm r,m}^*\mathbf G_{m,:}\mathbf f_{\mathrm B}\}_{m=1}^M$ also requires $\mathcal O(MN)$ operations, which is of the same order as the beamformer update and thus does not change the overall per-iteration scaling.

\subsubsection{Complexity Implication for General Codebooks}
The dominant complexity in each AO iteration arises from the FRIS configuration update. Specifically, the proposed Minkowski-geometry-based algorithm evaluates $h_{\mathcal P_{M_o,t}}(e^{j\phi})$ over $N_\phi$ candidate directions. For each $\phi$, $\{s_m(\phi)\}_{m=1}^M$ are computed. In the general finite-codebook case, evaluating each $s_m(\phi)$ requires $\mathcal O(|\mathcal W_m|)$ operations, leading to a total cost of $\mathcal O\left(\sum_{m=1}^M |\mathcal W_m|\right)$ for score computation. The subsequent Top-$M_o$ port selection can be carried out in $\mathcal O(M)$ using selection algorithms, while the optimal codeword assignment for the selected ports can be stored during the score evaluation without additional asymptotic cost. Therefore, the per-direction complexity of the FRIS update is $\mathcal O\left(\sum_{m=1}^M |\mathcal W_m| + M\right)$, and the total FRIS-update complexity per AO iteration scales as $\mathcal O\left(N_\phi\left(\sum_{m=1}^M |\mathcal W_m| + M\right)\right)$. Combining the above steps, the overall computational complexity after $T$ AO iterations is given by
\begin{equation}
\label{cct}
\mathcal O\left(T\left(
MN + N_\phi\left(\sum_{m=1}^M |\mathcal W_m| + M\right)
\right)\right).
\end{equation}

%In the practically important case of a regular $M_p$-gon phase-shifter codebook, the per-port scores admit closed-form expressions as shown in Section~\ref{subsec:regular_Mgon}. In this case, each $s_m(\phi)$ can be computed in constant time, and the FRIS-update complexity reduces to $\mathcal O(N_\phi M)$ per AO iteration. Accordingly, the overall complexity simplifies to
%\begin{equation}
%\label{ccgon}
%\mathcal O\left(T\left(MN + N_\phi M\right)\right),
%\end{equation}
%which is lower than that of the general finite-codebook case due to the symmetricity of the polygon codebook and the closed-form structure of $s_m(\phi)$.

\subsubsection{Complexity Implication for Regular $M_p$-gon Codebooks}
In the practically case of a regular $M_p$-gon codebook, as a function of $\phi$, $s_m(\phi)$ is piecewise cosine, where its maximizing codeword changes at~\eqref{eq:quant_boundary_set} with at most $2M_p$ breakpoint angles over $[0,2\pi)$, and the total number of breakpoints across all $M$ ports scales as $N_\phi=\mathcal O(MM_p)$. Between any two consecutive breakpoints, the active codeword for each port remains unchanged, and hence $h_{\mathcal P_{M_o,t}}(e^{j\phi})$ is smooth within each interval. As a result, the FRIS-update can be carried out by evaluating $h_{\mathcal P_{M_o,t}}(e^{j\phi})$ at one representative point per interval, leading to a per-iteration complexity of $\mathcal O(N_\phi M)$. Accordingly, the overall AO complexity simplifies to
\begin{equation}
\label{ccgon}
\mathcal O\left(T\left(MN+M^2M_p\right)\right).
\end{equation}

As demonstrated in Section~\ref{sec:simulation}, the AO framework converges within only a few iterations, with $T$ typically being much smaller than both $M$ and $N_\phi$. Therefore, the practical computational burden of the proposed algorithm is primarily governed by the structural system parameters $M$, $N$, $M_p$, and $N_\phi$, rather than by iterative convergence, confirming the scalability and efficiency of the proposed design.
\section{Simulation Results}
\label{sec:simulation}
\subsection{Simulation Setup}
\label{subsec:sim_setup}
\begin{table}[t]
\centering
\caption{Simulation Parameters}
\label{tab:simulation}
\begin{tabular}{C{5.8cm} C{1.2cm}}
\hline
\textbf{Parameter} & \textbf{Value} \\
\hline
Number of BS antennas $N$
& 16 \\
\hline
Number of FRIS elements $M$ (unless referred)
& $64~(8\times8)$ \\
\hline
Number of selected FRIS elements $M_o$
& 8 \\
\hline
Transmit SNR $\frac{P}{\sigma^2}$
& $0$~dB \\
\hline
Carrier frequency
& 3.5~GHz \\
\hline
Path-loss exponent
& 2.5 \\
\hline
Rician $K$-factor of $\mathbf G$
& $3$~dB \\
\hline
Resolution of phase codebook $|\mathcal W_m|$ or $M_p$ (unless referred)
& 8 \\
\hline
Normalized FRIS aperture $W_x$ (unless referred)
& 2 \\
\hline
\end{tabular}
\end{table}
We evaluate the performance of the proposed FRIS configuration algorithm, where the system parameters are in Table~\ref{tab:simulation}. Unless otherwise specified, we consider an FRIS-aided downlink system assisted by BS, FRIS, user located in $(0, 0, 5)$~m, $(10, 10, 5)$~m, $(50, 0, 0)$~m, respectively. Each selected FRIS port applies a discrete phase-shifting coefficient chosen from a finite unit-modulus codebook, where every $\arg w_m~(w_m\in\mathcal W_m)$ is uniformly selected from $[0, 2\pi)$. For performance verification, the achievable rate is evaluated as
\begin{equation}
\label{eq:rate_def_sim}
R = \log_2\left(1 + \frac{ |z|^2}{\sigma^2}\right).
\end{equation}
which is a strictly increasing function of $|z|$ and hence any improvement in $|z|$ directly translates into a monotonic increase in $R$. All reported results are averaged over $10^3$ independent channel realizations, and we evaluate the average achievable rate $\mathbb{E}[R]$. For optimality-verification plots, we additionally consider the normalized beamforming-gain ratio
\begin{equation}
\label{eq:DeltaG_def}
\Delta_G
\triangleq
\frac{|z|}{|z_{\rm opt}|},
\end{equation}
where $z_{\rm opt}$ corresponds to the globally optimal solution obtained via exhaustive search of $M_o$ FRIS ports among $M$ candidates. Due to the exponential complexity of brute-force search, $\Delta_G$ is evaluated for $M=5$ and $M_o\le 4$.

We compare the proposed algorithm with two benchmark schemes, where $\mathbf f_{\mathrm B}$ is determined by same means of the proposed framework. In the ``random ports'' scheme, $\Gamma^\star$ is selected uniformly at random from all $\binom{M}{M_o}$ possibilities. For the selected ports, $\{w_m^\star\}_{m\in\Gamma^\star}$ are chosen by $w_m^\star=\arg d-\arg h_m~(\forall m\in\Gamma^\star)$~\cite{JAP, DF}. In the ``Top-$|h|$ Ports'' scheme, the $M_o$ ports with the largest $|h_m|$ are deterministically selected to form $\Gamma^\star$, and the corresponding $\{w_m^\star\}_{m\in\Gamma^\star}$ are determined in the same quantized phase-alignment manner as in the random-port benchmark.

\begin{figure}[t]
  \begin{center}
    \includegraphics[width=0.7\columnwidth,keepaspectratio]{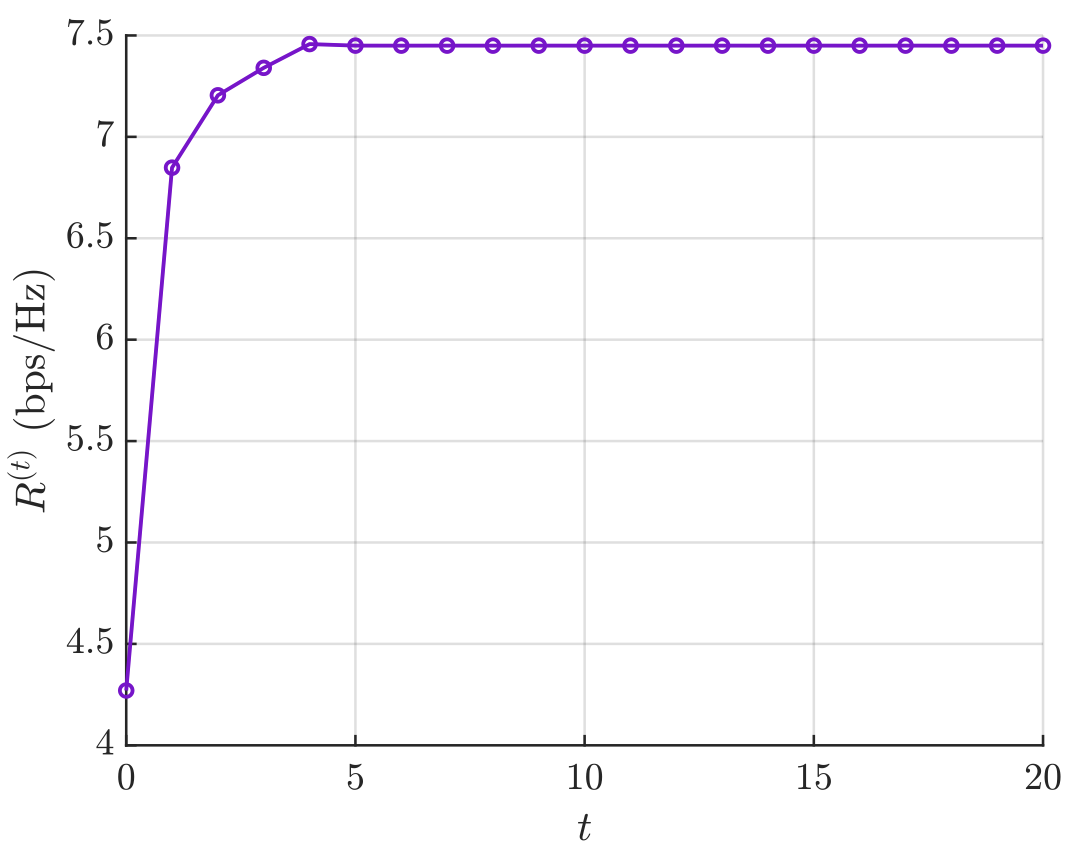}
    \caption{Rate $R^{(t)}$ according to iteration $t$ under general codebook.}
    \label{fig_iter}
  \end{center}
\end{figure}

Fig.~\ref{fig_iter} illustrates the convergence behavior of the proposed framework by showing the evolution of the achievable rate $R^{(t)}$ with respect to the iteration index $t$. Herein, $R^{(t)}$ increases rapidly and reaches saturation within only a few iterations, indicating fast convergence; which concludes that $T$ in Section~\ref{ccmp} remains very small, typically on the order of fewer than ten iterations. This trend is consistent with the theoretical properties of the proposed AO procedure, in which each iteration monotonically improves (or preserves) the objective value. The minor fluctuations observed around the converged value stem from the discrete FRIS port selection and phase quantization, but their impact is negligible, thereby confirming the stability and efficiency of the proposed algorithm.

\begin{figure}[t]
    \centering
    \subfloat[]{%
        \includegraphics[width=0.24\textwidth]{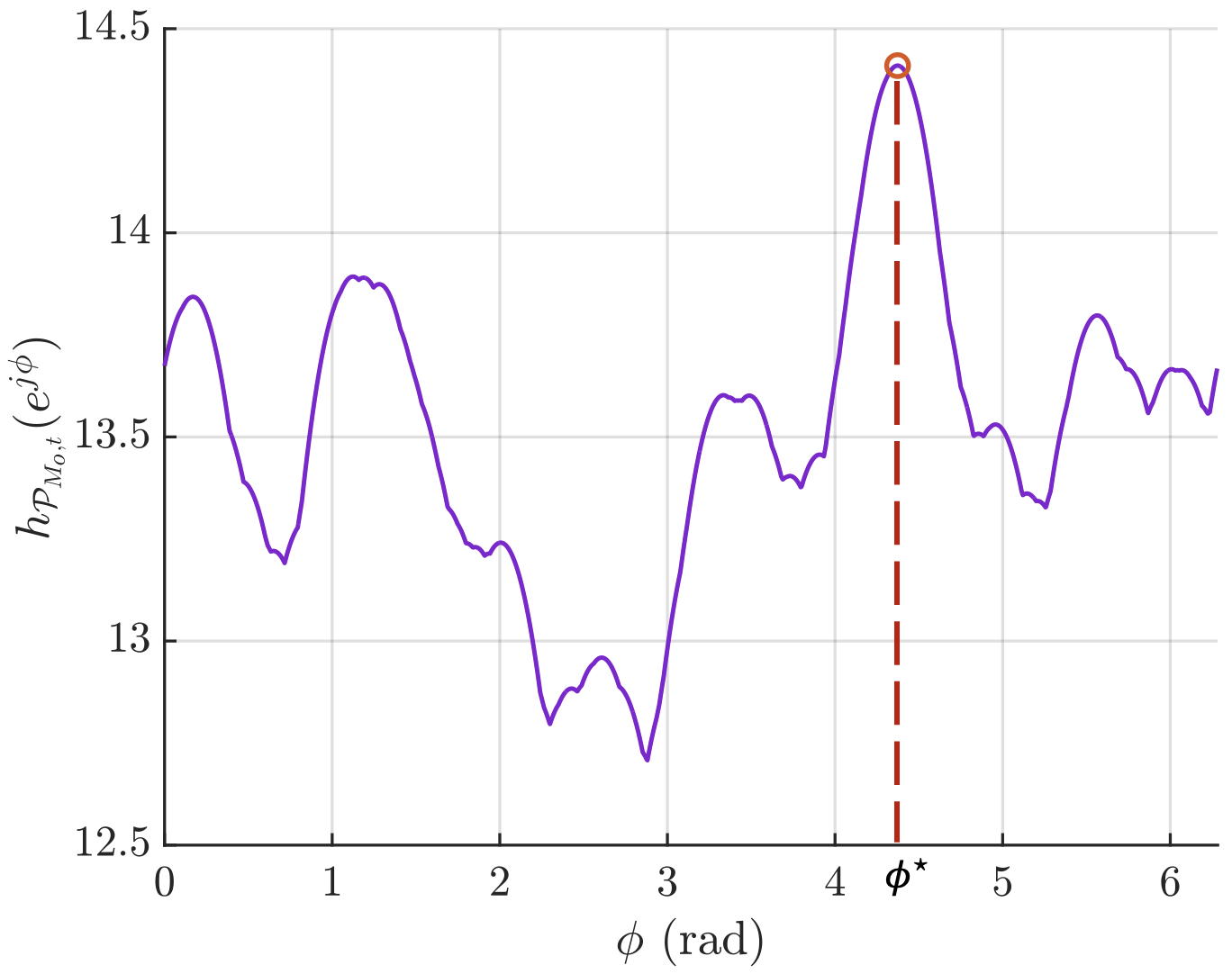}
        \label{fig_hphi1}%
    }
    \subfloat[]{%
        \includegraphics[width=0.24\textwidth]{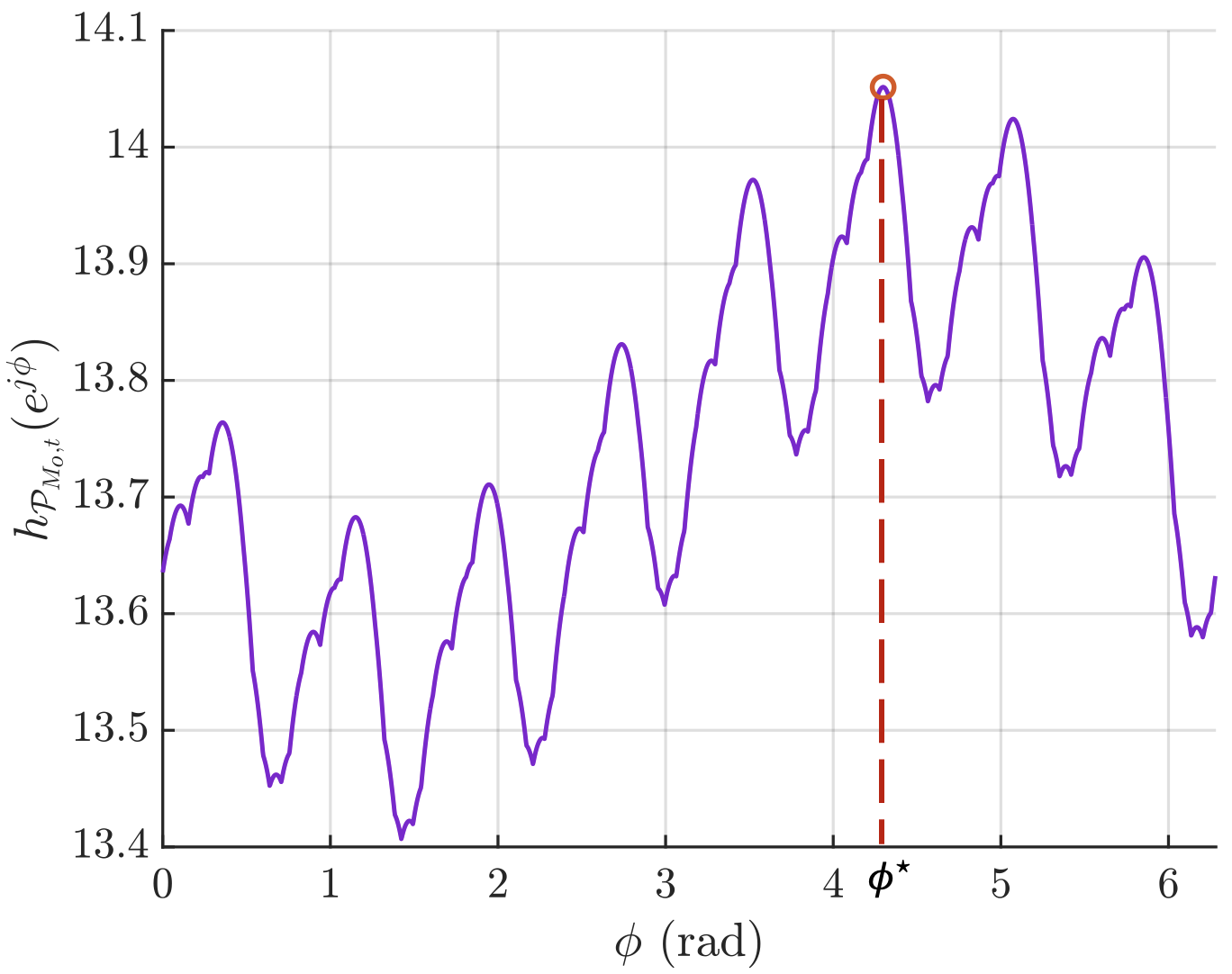}
        \label{fig_hphi2}%
    }
    \caption{$h_{\mathcal P_{M_o,t}}(e^j\phi)$ according to $\phi\in[0,2\pi)$ with (a) general and (b) regular $M_p$-gon codebook.}
    \label{fig_hphi}
\end{figure}

Fig.~\ref{fig_hphi} illustrates $h_{\mathcal P_{M_o,t}}(e^{j\phi})$ as a function of $\phi$. In the case of a general finite codebook in Fig.~\ref{fig_hphi1}, $h_{\mathcal P_{M_o,t}}(e^{j\phi})$ exhibits an irregular but continuous profile, reflecting the combined effect of port selection and discrete phase adaptation across directions. By contrast, under a regular $M_p$-gon codebook in Fig.~\ref{fig_hphi2}, it becomes more regular and structured due to uniform phase quantization and its closed-form cosine structure in~\eqref{eq:phi_obj_closed_form_dir}. Herein, within each interval where $\Gamma^\star(\phi)$ and $k_m^\star(\phi)$ remain unchanged, the function reduces to a smooth sum of cosine terms and admits at most one stationary point. Non-smooth points occur only at a finite number of breakpoints induced by $\Phi^{\mathrm {br}}$, which correspond to the visible kinks in Fig.~\ref{fig_hphi}. In both cases, $\phi^\star$ corresponds to the global maximum of $h_{\mathcal P_{M_o,t}}(e^{j\phi})$, which directly determines the optimal FRIS configuration. This observation validates the proposed one-dimensional directional search and highlights how finite phase codebooks shape the geometry of the support function.

%\begin{figure}[t]
%    \centering
%    \subfloat[]{%
%        \includegraphics[width=0.24\textwidth]{Fig/rm}
%        \label{fig_rm}%
%    }
%    \subfloat[]{%
%        \includegraphics[width=0.24\textwidth]{Fig/rmo}
%        \label{fig_rmo}%
%    }
%    \caption{$\mathbb E[R]$ according to (a) $M$ and (b) $M_o$.}
%    \label{fig_1}
%\end{figure}
\begin{figure}[t]
  \begin{center}
    \includegraphics[width=0.7\columnwidth,keepaspectratio]{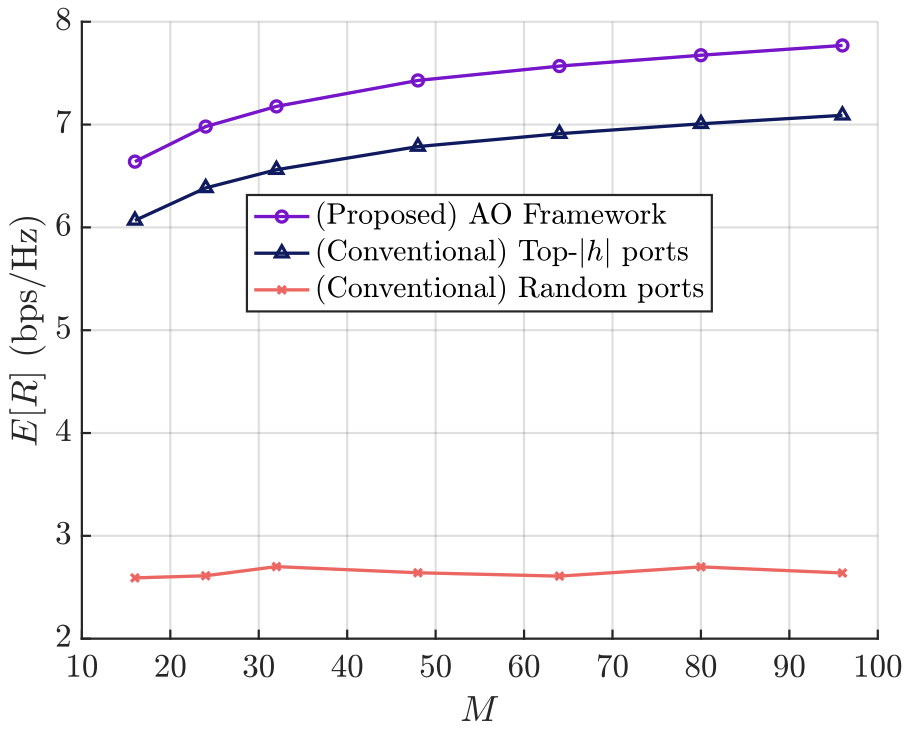}
    \caption{$\mathbb E[R]$ according to $M$ under general codebook.}
    \label{fig_rm}
  \end{center}
\end{figure}
\begin{figure}[t]
  \begin{center}
    \includegraphics[width=0.7\columnwidth,keepaspectratio]{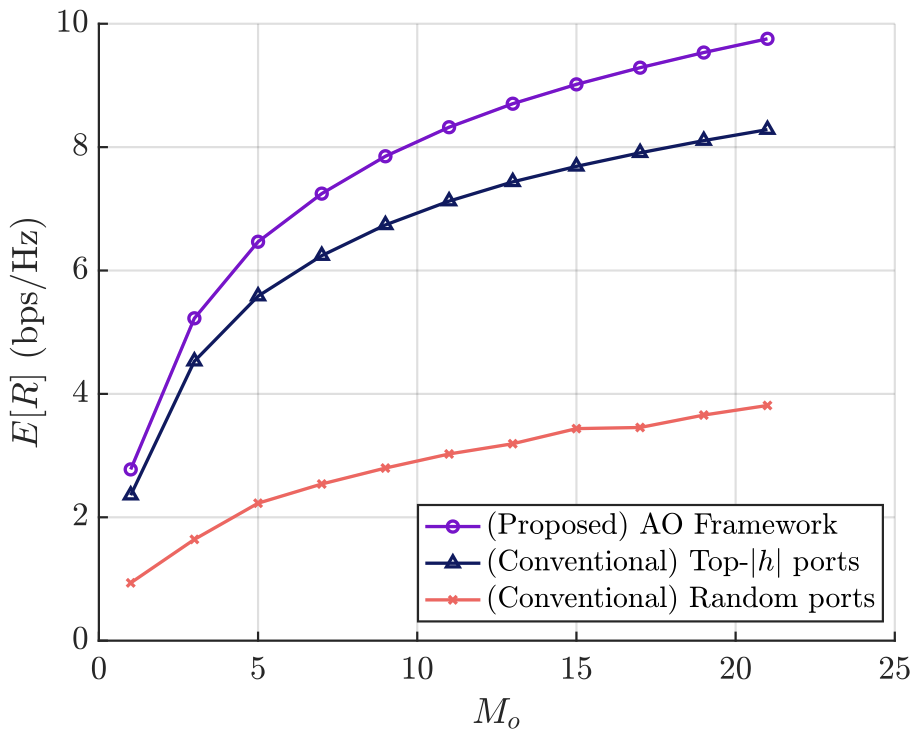}
    \caption{$\mathbb E[R]$ according to $M_o$ under general codebook.}
    \label{fig_rmo}
  \end{center}
\end{figure}

Fig.~\ref{fig_rm} illustrates $\mathbb{E}[R]$ as a function of $M$. As $M$ increases, $\mathbb{E}[R]$ improves monotonically for all schemes, since a larger candidate set provides more flexibility in selecting favorable FRIS ports. The proposed framework-based FRIS design benefits most from this increase, as it exploits the enlarged port pool to identify the Top-$M_o$ ports that maximize the support function, thereby achieving a higher coherent combining gain. In contrast, the random-port and Top-$|h|$ benchmarks rely on non-geometric or magnitude-only selection rules, which cannot fully utilize the increased spatial diversity. As a result, the performance gap between the proposed scheme and the benchmarks becomes more pronounced as $M$ grows. This is because increasing $M$ enlarges the search space of candidate ports and thus creates more multiport combining opportunities, the proposed scheme can more effectively exploit the additional spatial DoF through joint port selection and quantized phase adaptation. In contrast, the benchmarks use either random selection or a myopic magnitude-based rule (i.e., Top-$|h|$), which cannot fully capture inter-port complementarity increases with $M$. 

Fig.~\ref{fig_rmo} shows $\mathbb{E}[R]$ versus $M_o$. Increasing $M_o$ leads to a consistent performance improvement, since more ports contribute constructively to $z=d+\sum_{m\in\Gamma}h_m w_m$. The proposed Minkowski-geometry-based framework achieves the largest gain by selecting the $M_o$ ports with the highest $s_m(\phi)$, ensuring near-optimal coherent combining. By contrast, the Top-$|h|$ scheme does not explicitly account for phase alignment across ports, while the random-port scheme selects ports without considering their geometric contribution, resulting in noticeably lower performance, especially for larger $M_o$.

\begin{figure}[t]
  \begin{center}
    \includegraphics[width=0.7\columnwidth,keepaspectratio]{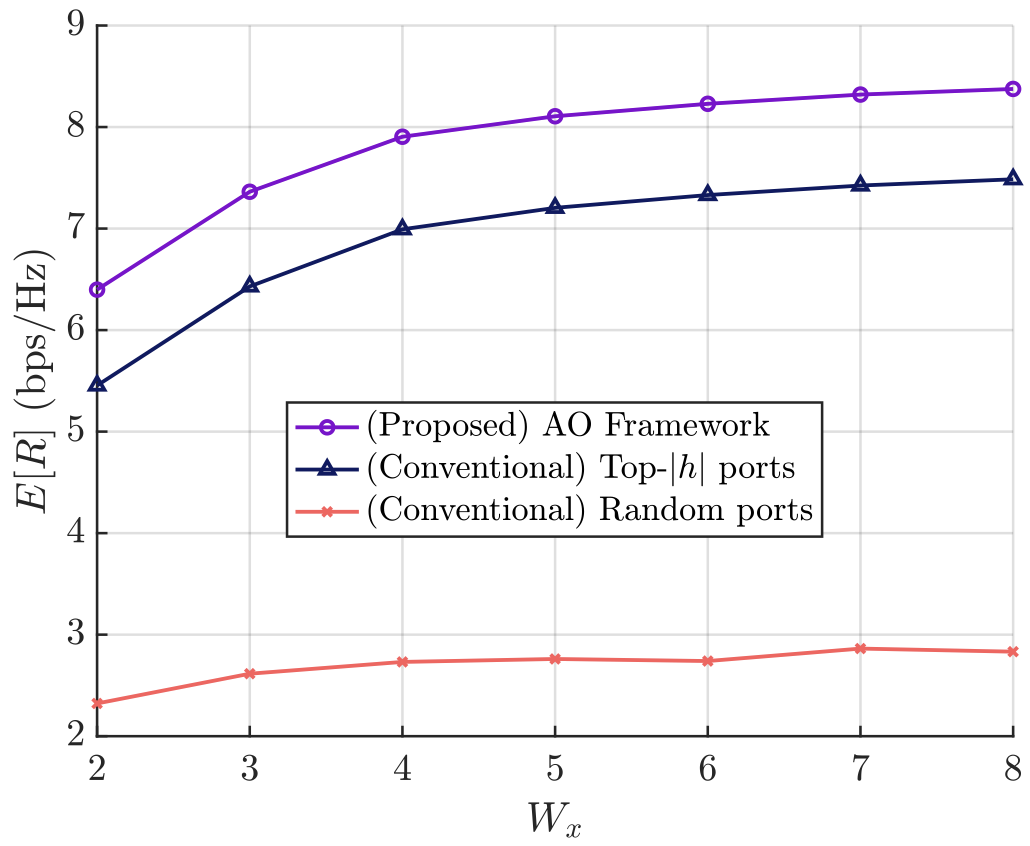}
    \caption{$\mathbb E[R]$ according to $W_x$ under general codebook.}
    \label{fig_rwx}
  \end{center}
\end{figure}
Fig.~\ref{fig_rwx} illustrates $\mathbb{E}[R]$ versus $W_x$. As $W_x$ increases, the inter-element spacing grows and the spatial correlation among candidate ports decreases, providing richer location diversity and thus improving the beamforming gain. Accordingly, $\mathbb{E}[R]$ increases and gradually saturates for sufficiently large $W_x$, since after $d=\frac{\lambda}{2}$-spacing,  the spatial correlation characterized by $j_0$, which experiences its steepest decline around $\frac{\lambda}{2}$~\cite{perlimfas, corrhalf, bessbook}, i.e., $W_x>4~(\because M_x=8)$, is already sufficiently mitigated. Herein, the proposed AO framework consistently achieves the highest rate, whereas the Top-$|h|$ and random-port benchmarks cannot fully exploit the additional spatial DoF.

\begin{figure}[t]
  \begin{center}
    \includegraphics[width=0.7\columnwidth,keepaspectratio]{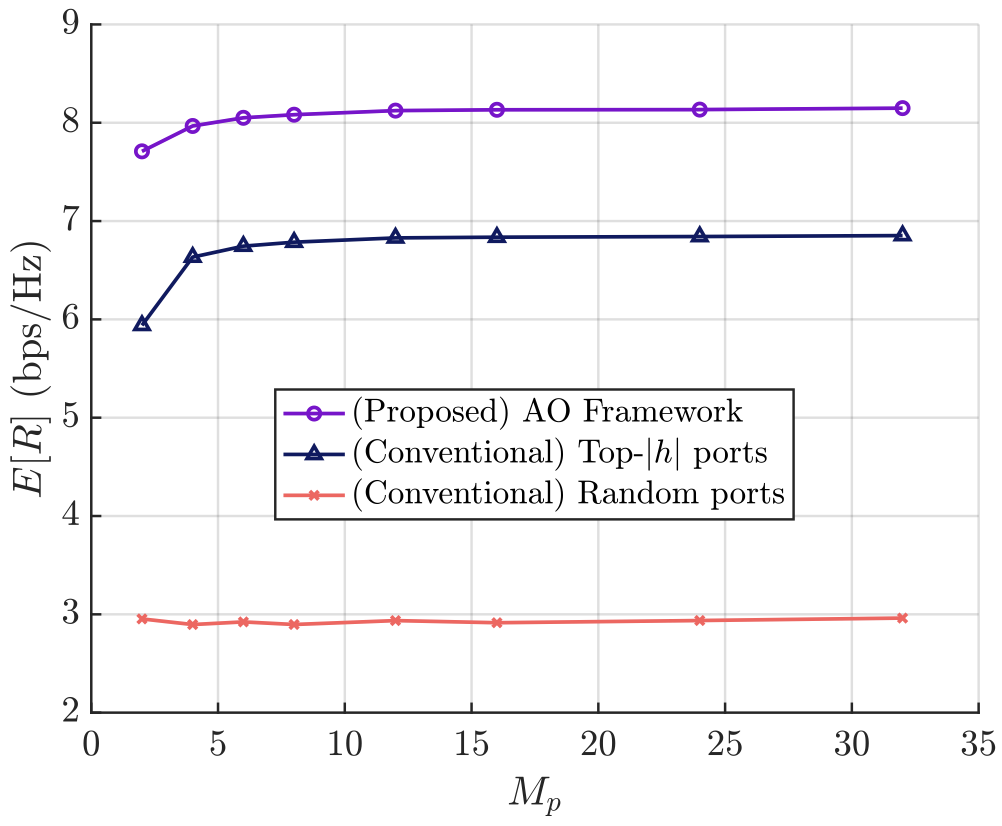}
    \caption{$\mathbb E[R]$ according to $M_{\mathrm p}$ under $M_{\mathrm p}$-gon codebook.}
    \label{fig_rmp}
  \end{center}
\end{figure}

Fig.~\ref{fig_rmp} depicts the impact of $M_p$ on $\mathbb{E}[R]$ under a regular $M_p$-gon codebook. As $M_p$ increases, $\mathbb{E}[R]$ improves and gradually saturates. This behavior is explained by the reduction of phase-quantization error as $M_p$ grows: finer phase resolution allows each FRIS port to better align its reflected signal with the optimal direction. According to~\eqref{eq:delta_def} and~\eqref{eq:score_closed_form}, $\delta_{M_p}$ decreases with $M_p$, leading to larger effective combining gains. Once the quantization error becomes sufficiently small, further increases in $M_p$ yield diminishing returns, resulting in the observed saturation. Nevertheless, the proposed framework consistently outperforms the benchmarks, since it jointly optimizes port selection and quantized phase adaptation in a direction-aware manner based on Minkowski-sum-geomtery of codebook spaces, whereas the benchmarks rely on heuristic or port-wise rules that cannot fully exploit the reduced quantization error even at high phase resolutions.

\begin{figure}[t]
  \begin{center}
    \includegraphics[width=0.7\columnwidth,keepaspectratio]{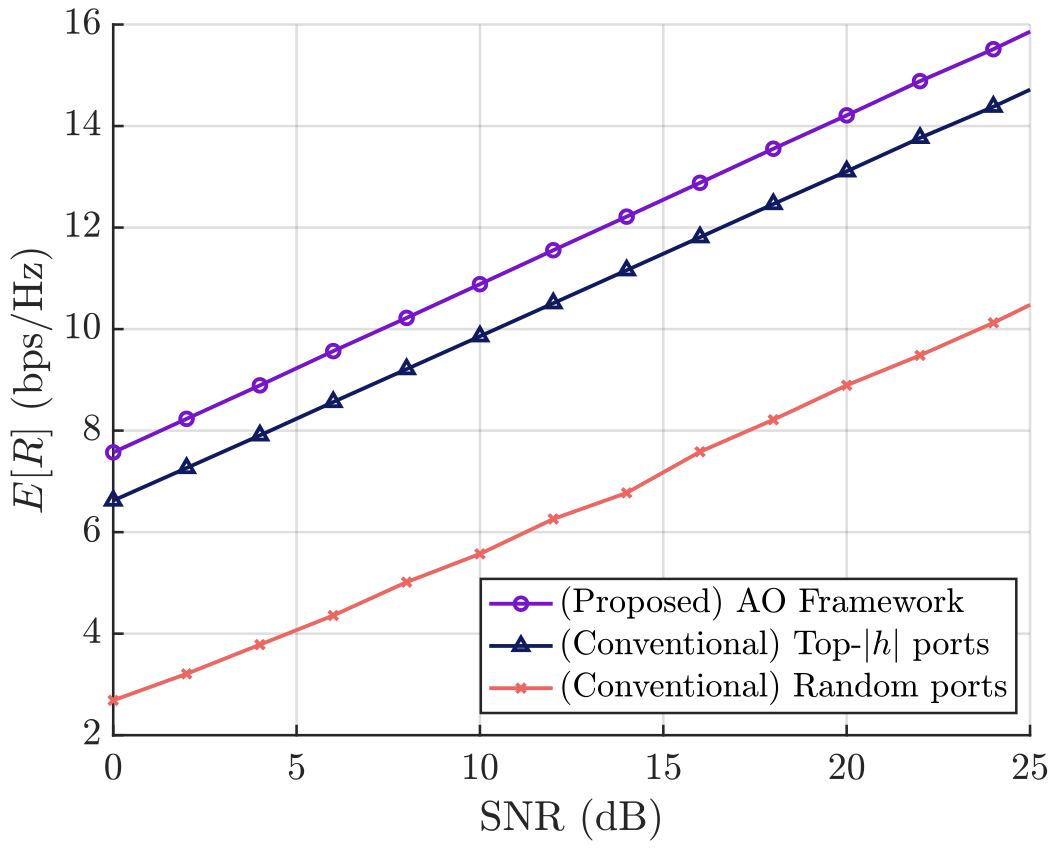}
    \caption{$\mathbb E[R]$ according to $\mathrm{SNR}$ under general codebook.}
    \label{fig_rsnr}
  \end{center}
\end{figure}

Fig.~\ref{fig_rsnr} presents $\mathbb{E}[R]$ as a function of signal-to-noise-ratio $(\mathrm{SNR})$. Herein, the proposed framework consistently outperforms the benchmark schemes across the entire $\mathrm{SNR}$ range. This advantage stems from the joint optimization of beamforming and FRIS configuration, which maximizes $|z|$ and thus preserves a strong array gain even at high $\mathrm{SNR}$. In contrast, the benchmark schemes suffer from suboptimal port selection or phase misalignment, leading to a persistent performance gap.

%\begin{figure}[t]
%    \centering
%    \subfloat[]{%
%        \includegraphics[width=0.24\textwidth]{Fig/dmo}
%        \label{fig_dmo}%
%    }
%    \subfloat[]{%
%        \includegraphics[width=0.24\textwidth]{Fig/dmp}
%        \label{fig_dmp}%
%    }
%    \caption{$\Delta_G$ according to (a) $M_o$ (general codebook) and (b) $M_{\mathrm p}$ ($M_{\mathrm p}$-gon codebook).}
%    \label{fig_1}
%\end{figure}
\begin{figure}[t]
  \begin{center}
    \includegraphics[width=0.7\columnwidth,keepaspectratio]{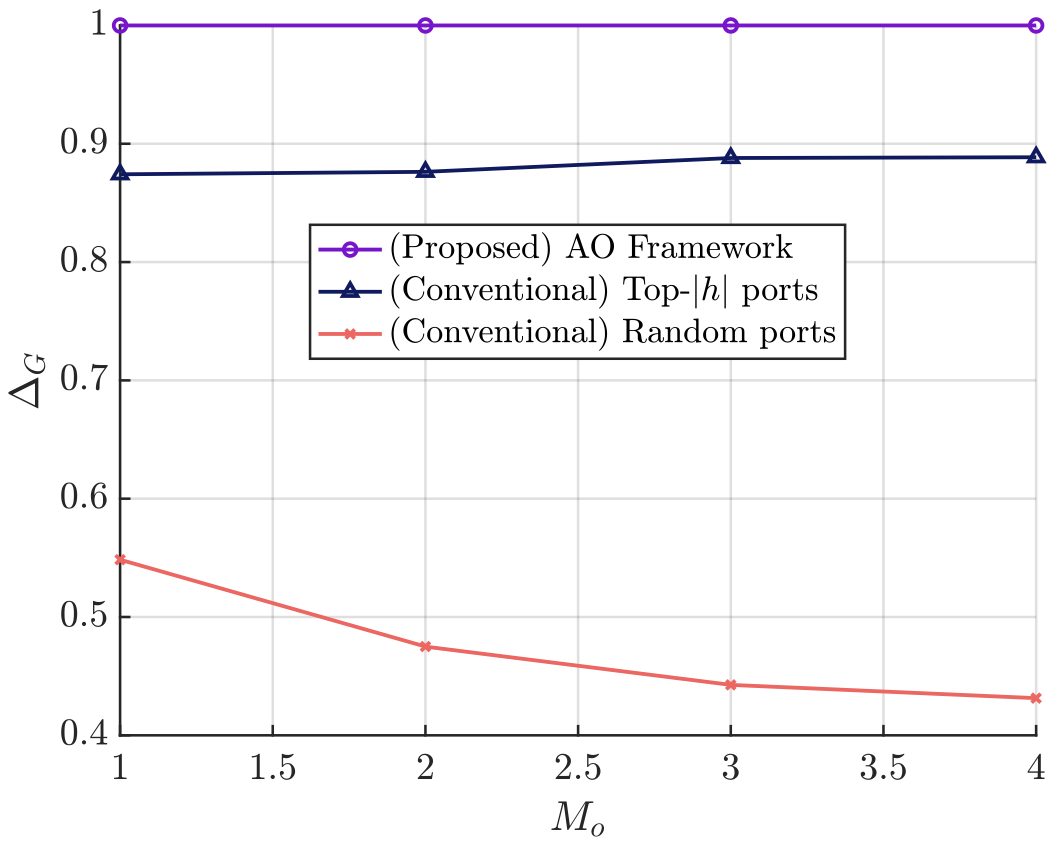}
    \caption{$\Delta_G$ according to $M_o$ under general codebook.}
    \label{fig_dmo}
  \end{center}
\end{figure}
\begin{figure}[t]
  \begin{center}
    \includegraphics[width=0.7\columnwidth,keepaspectratio]{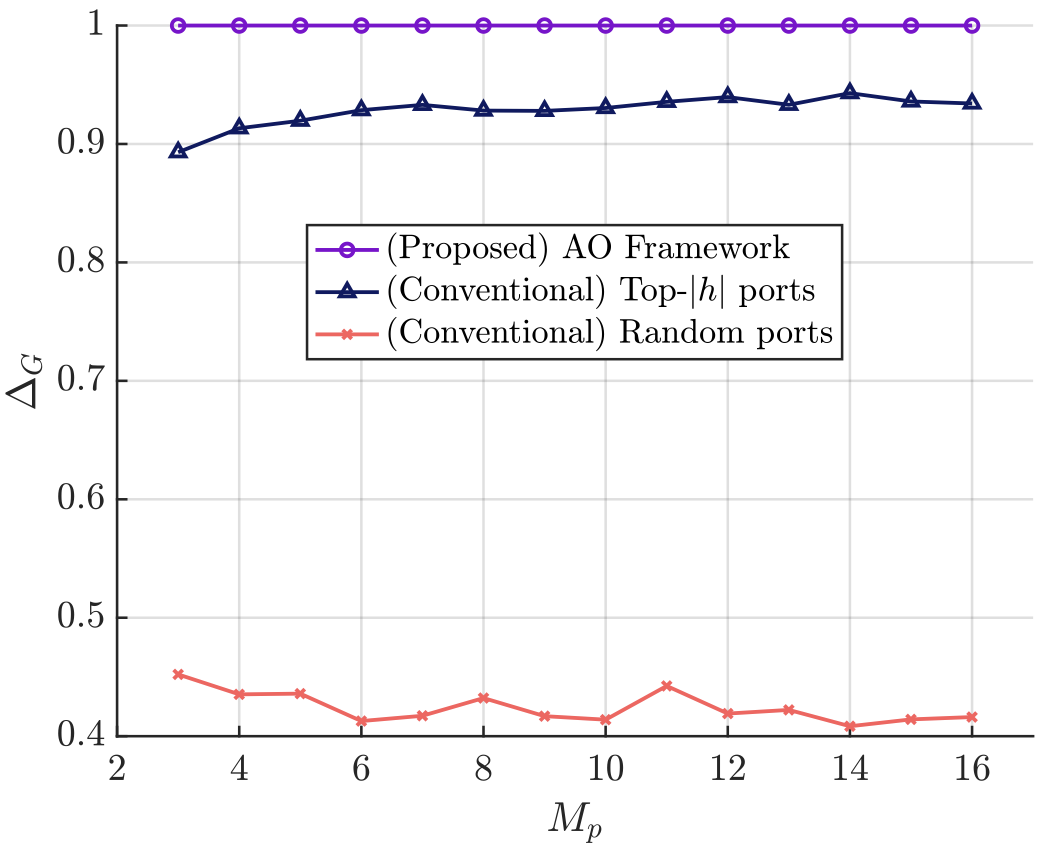}
    \caption{$\Delta_G$ according to $M_{\mathrm p}$ under $M_{\mathrm p}$-gon codebook.}
    \label{fig_dmp}
  \end{center}
\end{figure}

Fig.~\ref{fig_dmo} shows $\Delta_G$ versus $M_o$ for general finite codebooks. The proposed scheme achieves values of $\Delta_G$ close to one even for moderate $M_o$, indicating that it attains near-optimal performance compared to exhaustive search. This result confirms that the MRT and the Minkowski-geometry-based formulation effectively captures the essential structure of the optimal FRIS configuration. In contrast, the benchmark schemes exhibit significantly lower $\Delta_G$, as their heuristic port-selection strategies fail to account for the joint geometric contribution of multiple ports.

Fig.~\ref{fig_dmp} illustrates $\Delta_G$ as a function of $M_p$ under a regular $M_p$-gon codebook. As $M_p$ increases, $\Delta_G$ rapidly approaches one, demonstrating that the proposed framework converges to the globally optimal solution as the phase resolution improves. Notably, near-optimal performance is achieved even for relatively small $M_p$, highlighting the robustness of the proposed design to coarse phase quantization. This observation confirms that the proposed framework can effectively operate under practical hardware constraints with limited phase resolution.

\section{Conclusion}
In this paper, we proposed a principled framework for beamforming-gain maximization in FRIS-assisted downlink systems under multi-antenna BS and practical finite-resolution phase constraints. To tackle the resulting mixed discrete optimization, an AO framework was developed that combines a closed-form MRT update at the BS with a Minkowski-geometry-based reformulation of the FRIS configuration problem. By exploiting the geometric structure of the reflected-sum codebook space, the FRIS subproblem was reduced to a one-dimensional maximization of a support function, enabling efficient configuration via direction-dependent per-port scoring, Top-$M_o$ port selection, and optimal quantized phase assignment. For the practically important case of regular $M_p$-gon phase-shifter codebooks, we further revealed a piecewise-smooth structure of the resulting support function and developed a finite candidate-angle maximization procedure that provably identifies the global optimum by evaluating only a finite set of critical angles, thereby avoiding exhaustive one-dimensional search. Simulation results demonstrated consistent performance gains over benchmark schemes, near-optimality with respect to exhaustive search, accurate identification of the optimal phase via support-function maximization, and fast, stable convergence. Overall, the proposed framework provides an efficient and geometry-driven solution for FRIS-aided wireless networks and offers a solid foundation for extensions to more general FRIS architectures in 6G systems.

\bibliographystyle{IEEEtran}
\bibliography{IEEEexample}

\end{document}